\newtheorem{theorem}{Theorem}[section]
\newtheorem{lemma}[theorem]{Lemma}
\newtheorem{corollary}[theorem]{Corollary}
\newtheorem{proposition}[theorem]{Proposition}
\theoremstyle{definition}
\newtheorem{definition}[theorem]{Definition}
\newtheorem{example}[theorem]{Example}
\theoremstyle{remark}
\newtheorem{remark}[theorem]{Remark}
\numberwithin{equation}{section}
\newcommand{\F}{\mathbb{F}}
\newcommand{\Z}{\mathbb{Z}}
\begin{document}

\title[The homogeneous weight for $R_k$ and QT-codes over $R_k$]{The homogeneous weight for $R_k$, related Gray map and new binary quasicyclic codes}

\author{Bahattin Yildiz}
\author{Ismail Gokhan Kelebek}
\address{Department of Mathematics, Fatih University, 34500
Istanbul, Turkey} \email{byildiz@fatih.edu.tr,
gkelebek@fatih.edu.tr}

\subjclass[2000]{Primary 94B15; Secondary 94B05}

\keywords{homogeneous weights, cyclic codes, quasitwisted codes, quasicyclic codes,
codes over rings}

\begin{abstract}
Using theoretical results about the homogeneous weights for Frobenius rings, we describe the homogeneous weight
for the ring family $R_k$, a recently introduced family of Frobenius rings which have been used extensively in coding theory. We find an associated Gray map for the homogeneous weight using first order Reed-Muller codes and we describe some of the general properties of the images of codes over $R_k$ under this Gray map. We then discuss quasitwisted codes over $R_k$ and their binary images under the homogeneous Gray map. In this way, we find many optimal binary codes which are self-orthogonal and quasicyclic. In particular, we find a substantial number of optimal binary codes that are quasicyclic of index $8$, $16$ and $24$, nearly all of which are new additions to the database of quasicyclic codes kept by Chen.
\end{abstract}

\maketitle



\section{Introduction}
Codes over Frobenius rings make up an important field of study in
the literature of coding theory. An important aspect of Frobenius
rings is that they possess a so-called generating character and it
is possible to define a homogeneous weight on them. For different
ways to describe the homogeneous weight and the related theoretical
results we refer the reader to such works as \cite{Constant},
\cite{Greferath}, \cite{Greferath2} and \cite{Honold}.

Recently, a family of Frobenius rings denoted by $R_k$ have been introduced and studied through different aspects with coding theory. These are finite commutative rings of characteristic $2$ that are non-principal, non-chain when $k\geq 2$ and that generalize such rings as $\F_2+u\F_2$ and $\F_2+u\F_2+v\F_2+uv\F_2$. More on these rings can be found in \cite{Rk}, \cite{cycRk}, \cite{CycR2}, \cite{R2}, etc.

In this work, contrary to the Lee weights used for the ring $R_k$ up to now, we describe the homogeneous weight for $R_k$, using the theoretical results related to homogeneous weights for Frobenius rings, and in particular using a generating character for the ring $R_k$. We then find a Gray map, $\psi_k$, using the first order Reed-Muller codes, allowing us to map codes over $R_k$ of length $n$ to binary linear codes of length $2^{2^k-1}n$ in a distance-preserving way. We show that the $\psi_k$-images of all linear codes over $R_k$ are self-orthogonal binary linear codes for $k\geq 2$. We also describe quasitwisted codes over $R_k$, giving some theoretical results (such as all quasitwisted codes over $R_k$ of odd coindex are quasicyclic) and describe the binary images under the $\psi_k$-map. We also find many optimal binary codes as the $\psi_k$-images of quasitwisted codes over $R_k$ for suitable $k$. These binary images are at the same time self-orthogonal and quasicyclic of certain indices, almost all of which are new additions to the database of known quasicyclic codes kept in \cite{ChenDatabase}.

The rest of the work is organized as follows:
In section 2 we give some of the preliminaries about the ring family $R_k$ and the homogeneous weights on Frobenius rings. In section 3, we find the general form of the homogeneous weight for $R_k$ and we also describe the related Gray map $\psi_k$. Section 4 includes the theoretical discussion on quasitwisted codes over $R_k$. In section 5, we give the numerical results, where we tabulate the optimal binary codes we have obtained from the $\psi_k$-images of quasitwisted codes over $R_k$. We then finish with section 6, where we mention the concluding remarks and possible directions of future research in related areas.

\section{Preliminaries}
\subsection{The rings $R_k$}
The family of rings denoted by $R_k$ have been introduced in \cite{Rk}. Leaving the details of these rings to the aforementioned work, we recall some of the basic properties, the proofs of which can be found in \cite{Rk}. For $k\geq 1$, let
 \begin{equation} \label{defineRk}
R_k = \F_2[u_1,u_2,\dots,u_k] / \langle u_i^2 =0, u_iu_j = u_j u_i \rangle.
\end{equation}
We actually take $R_0=\F_2$, the binary field.
The basis elements of $R_k$ can be viewed, using subsets $A \subseteq \{1,2, \dots, k\}$ by letting
\begin{equation}
u_A := \prod_{i \in A}u_i
\end{equation}
with the convention that $u_{\emptyset} = 1.$
Then any element of $R_k$ can be represented as
\begin{equation} \label{eq4}
\sum_{A \subseteq \{1, \dots, k\}}c_Au_A, \qquad c_A \in \mathbb{F}_2.
\end{equation}

The ring $R_k$ is a  local  ring with maximal
ideal $\langle u_1,u_2,\dots, u_k \rangle$ and $|R_k| = 2^{(2^k)}$.
It is neither a principal ideal ring nor a chain ring when $k\geq 2$, but is a Frobenius ring for all $k\geq 0$.

An element of $R_k$ is a
unit if and only if the coefficient of $u_{\emptyset}$ is 1 and each
unit is also its own inverse.
The following expresses this more accurately:
\begin{equation}\label{unitsquare}
\forall a\in R_k \:\:\:\: a^2 = \left \{
\begin{array}{ll}
1 & \textrm{if $a$ is a unit}
\\
0 & \textrm {otherwise.}
\end{array}\right.
\end{equation}

The following observations follow easily from the structure of these rings and will be needed later:
\begin{lemma}\label{unituv}{\bf (i)} For any $a\in R_k$, we have
$$a\cdot (u_1u_2\cdots u_k) = \left \{
\begin{array}{ll}
0 & \textrm{if $a$ is a non-unit}
\\
u_1u_2\cdots u_k & \textrm {if $a$ is a unit.}
\end{array}\right.$$
\par {\bf (ii)}For any unit $\alpha \in R_k$ and $x\in R_k$, we have
$$\alpha \cdot x = u_1u_2\cdots u_k \Leftrightarrow x = u_1u_2\cdots u_k.$$
\end{lemma}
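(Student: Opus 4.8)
The plan is to work directly with the explicit coordinate representation \eqref{eq4} of elements of $R_k$, together with the single multiplicative observation that $u_A \cdot (u_1 u_2 \cdots u_k) = 0$ as soon as $A \neq \emptyset$. Indeed, writing $u_1 u_2 \cdots u_k = u_{\{1,\dots,k\}}$, any nonempty $A$ shares at least one index $i$ with $\{1,\dots,k\}$, and the relation $u_i^2 = 0$ forces the product to vanish; only $A = \emptyset$ survives, giving $u_{\emptyset} \cdot u_{\{1,\dots,k\}} = u_{\{1,\dots,k\}}$. I would isolate this annihilation statement first, since both parts rest on it.

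For part {\bf (i)}, I would expand $a = \sum_{A} c_A u_A$ with $c_A \in \F_2$ and multiply through by $u_1 u_2 \cdots u_k$. By the observation above every term with $A \neq \emptyset$ dies, leaving $a \cdot (u_1 u_2 \cdots u_k) = c_{\emptyset}\, u_1 u_2 \cdots u_k$. It then remains to recall from the stated structure of $R_k$ that $a$ is a unit precisely when the coefficient $c_{\emptyset}$ of $u_{\emptyset} = 1$ equals $1$; substituting $c_\emptyset = 1$ for units and $c_\emptyset = 0$ for non-units yields the two cases exactly.

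For part {\bf (ii)}, the direction $(\Leftarrow)$ is immediate from part (i): if $x = u_1 u_2 \cdots u_k$ then, since $\alpha$ is a unit, $\alpha \cdot x = u_1 u_2 \cdots u_k$. For $(\Rightarrow)$, I would use the fact recorded in \eqref{unitsquare} that every unit is its own inverse, so $\alpha^2 = 1$. Assuming $\alpha \cdot x = u_1 u_2 \cdots u_k$ and multiplying both sides by $\alpha$ gives $x = \alpha^2 x = \alpha \cdot (\alpha \cdot x) = \alpha \cdot (u_1 u_2 \cdots u_k)$, and the right-hand side equals $u_1 u_2 \cdots u_k$ again by part (i) applied to the unit $\alpha$. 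Hence $x = u_1 u_2 \cdots u_k$.

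Honestly, neither part presents a genuine obstacle; the only thing to get right is the bookkeeping in the basis expansion and the explicit identification of units via $c_\emptyset$. The one point worth stating cleanly is the annihilation identity $u_A \cdot (u_1 \cdots u_k) = 0$ for $A \neq \emptyset$, because everything else is a one-line consequence of it together with the self-inverse property \eqref{unitsquare} of units.
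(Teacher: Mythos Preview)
Your argument is correct in both parts. The paper itself does not supply a proof of this lemma; it simply asserts that these observations ``follow easily from the structure of these rings,'' so there is nothing to compare against beyond noting that your expansion via \eqref{eq4}, the annihilation $u_A\cdot(u_1\cdots u_k)=0$ for $A\neq\emptyset$, and the self-inverse property \eqref{unitsquare} are exactly the structural facts the authors have in mind.
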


We denote the set of units of $R_k$ by $\mathcal{U}(R_k)$ and non-units by $\mathcal{D}(R_k)$. It is clear that
\begin{equation}\label{units}
|\mathcal{U}(R_k)| = |\mathcal{D}(R_k)| = 2^{2^k-1} \:\:\: \textrm{and} \:\:\: \mathcal{U}(R_k) = \mathcal{D}(R_k)+1.
\end{equation}

A linear code of length $n$ over $R_k$ is defined to be an
$R_k$-submodule of $R_k^n$.

\subsection{The Homogeneous weight on Frobenius rings}
Contrary to most of the work done on $R_k$, instead of the Lee
weight and the related Gray map, we will be interested in the
homogeneous weight and the related Gray map. Homogeneous weights
were first introduced by Constantinescu and Heise \cite{Constant}.
We may consider these weights as a generalization of the Hamming
weight for finite rings.

\begin{definition}[{\cite[p.~19]{Greferath}}]
A real valued function $w$ on the finite ring $R$ is called a (left)
homogeneous weight, if $w(0)=0$ and the following is true.\newline
\newline
(H1) For all $x,y\in R,$ $Rx=Ry$ implies $w(x)=w(y).$\newline
\newline
(H2) There exists a real number $\gamma $ such that
\begin{equation*}
\underset{y\in R_{x}}{\sum }w(y)=\gamma \left\vert Rx\right\vert
\text{ for all }x\in R\backslash \{0\}.
\end{equation*}
\end{definition}

The number $\gamma $ is the average value of $w$ on $R$, and from
condition (H2) we can deduce that the average value of $w$ is
constant on every non-zero principal ideal of $R.$

Homogeneous weights for Frobenius rings can be described by using
the M\"{o}bius function. For a finite poset $P$, consider the
function $\mu :P\times P\rightarrow \mathbb{C}$ implicitly defined
by $\mu (x,x)=1$ and $\underset{y\leq t\leq x}{\sum }\mu (t,x)=0$ if
$y<x$ and $\mu (y,x)=0$ if $y\nleqslant x$. It is called the
M\"{o}bius function on $P$ and induces for arbitrary pairs of
real-valued functions $f,g$ on $P$ the following equivalences,
referred to as M\"{o}bius inversion
\begin{equation*}
g(x)=\underset{y\leq x}{\sum }f(y)\text{ for all }x\in P\Leftrightarrow f(x)=%
\underset{y\leq x}{\sum }g(y)\mu (y,x).
\end{equation*}
Let $R$ be a finite ring and $\mu $ be the M\"{o}bius function on the set $%
\left\{ Rx\left\vert x\in R\right. \right\} $ of its principal left
ideals (partially ordered by inclusion). Further let $R^{\times }$
denote the set of units in $R.$ The conditions for the existence and
uniqueness of homogeneous weights on finite rings are given by the
following theorem in \cite{Greferath}.

\begin{theorem}[{\cite[p.~19]{Greferath}}]
\label{weight} A real valued function $w$ on the finite ring $R$ is
a homogeneous weight if
and only if there exists a real number $\gamma $ such that $w(x)=\gamma %
\left[ 1-\frac{\mu (0,Rx)}{\left\vert R^{\times }x\right\vert
}\right] $ for all $x\in R.$
\end{theorem}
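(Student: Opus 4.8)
The plan is to exploit condition (H1) to regard $w$ as a function defined on the poset $\mathcal{L}$ of principal left ideals of $R$ (ordered by inclusion), and then to recover $w$ from (H2) by a single application of M\"obius inversion on $\mathcal{L}$. For a principal left ideal $I=Rx$ I would write $\bar w(I):=w(x)$, which is well defined by (H1), and let $g(I):=\#\{y\in R: Ry=I\}$ denote the number of generators of $I$. Since every element $y\in I$ generates a principal left ideal $Ry\subseteq I$, partitioning $I$ according to the ideal generated by each of its elements gives the two elementary identities $\lvert I\rvert=\sum_{J\subseteq I}g(J)$ and $\sum_{y\in I}w(y)=\sum_{J\subseteq I}g(J)\bar w(J)$, both sums running over principal left ideals $J\subseteq I$.

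For the forward direction I would rewrite (H2) in terms of the function $F(I):=\sum_{J\subseteq I}g(J)\bar w(J)$. Condition (H2) together with $w(0)=0$ says precisely that $F(I)=\gamma\lvert I\rvert$ for $I\neq\{0\}$ while $F(\{0\})=0$, and these two cases are unified by $F(J)=\gamma\lvert J\rvert-\gamma\,\delta_{J,\{0\}}$. Applying M\"obius inversion to $F(I)=\sum_{J\subseteq I}g(J)\bar w(J)$ then yields
\begin{equation*}
g(I)\bar w(I)=\sum_{J\subseteq I}\mu(J,I)F(J)=\gamma\sum_{J\subseteq I}\mu(J,I)\lvert J\rvert-\gamma\,\mu(\{0\},I).
\end{equation*}
To evaluate the remaining sum I would substitute $\lvert J\rvert=\sum_{K\subseteq J}g(K)$, interchange the order of summation, and invoke the defining identity $\sum_{K\subseteq J\subseteq I}\mu(J,I)=\delta_{K,I}$ recorded in the excerpt; this collapses $\sum_{J\subseteq I}\mu(J,I)\lvert J\rvert$ to $g(I)$. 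Dividing by the nonzero integer $g(I)$ (note $x$ itself generates $I$, so $g(I)\geq1$) gives
\begin{equation*}
\bar w(I)=\gamma\left[1-\frac{\mu(\{0\},I)}{g(I)}\right],
\end{equation*}
which is the asserted formula once $g(Rx)$ is identified with $\lvert R^{\times}x\rvert$. For the converse I would take the displayed formula as a definition, observe that it depends only on $Rx$ so that (H1) is immediate, and verify (H2) by running the same circle of M\"obius/zeta identities backwards so that $\sum_{y\in Rx}w(y)$ reconstructs to $\gamma\lvert Rx\rvert$ for $x\neq0$.

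The step I expect to be the main obstacle is the identification $g(Rx)=\lvert R^{\times}x\rvert$, i.e. the claim that \emph{every} generator of $Rx$ is a unit multiple of $x$, not merely that unit multiples are generators. For a general finite ring this needs an argument that $Ry=Rx$ forces $y=ux$ with $u\in R^{\times}$; for the local ring $R_k$ of primary interest here it is transparent, since $y=rx$ and $x=sy$ make $rs$ either a unit---forcing $r\in\mathcal{U}(R_k)$ because the non-units $\mathcal{D}(R_k)$ form an ideal---or nilpotent, in which case $y=(rs)^N y=0$. A secondary point requiring care is the bookkeeping of the boundary term at the zero ideal, which is exactly what survives to produce the $\mu(0,Rx)$ appearing in the numerator.
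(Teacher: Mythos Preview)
The paper does not prove this theorem; it is stated with attribution to \cite{Greferath} as a background result and no argument is supplied. So there is no in-paper proof to compare your attempt against.

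For what it is worth, your M\"obius-inversion argument is sound and is in fact the approach of the cited source. The computation collapsing $\sum_{J\subseteq I}\mu(J,I)\lvert J\rvert$ to $g(I)$ is correct, and your bookkeeping of the boundary term at the zero ideal is exactly what produces the $\mu(0,Rx)$ in the numerator; the check $w(0)=\gamma[1-\mu(\{0\},\{0\})/1]=0$ goes through as well. The one point you flag---that every generator of $Rx$ must be a unit multiple of $x$---is indeed the only nontrivial ingredient, but it \emph{is} true for arbitrary finite rings, not just local ones: finite rings are Artinian and hence have stable range one, which gives precisely that $Rx=Ry$ forces $y\in R^{\times}x$. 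Your local-ring argument already covers $R_k$, which is all that the present paper needs, so nothing downstream depends on the general case.
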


Honold \cite{Honold} described the homogeneous weights on Frobenius
rings in terms of generating characters.

\begin{proposition}[{\cite[p.~412]{Honold}}]\label{charac}
Let $R$ be a finite ring with generating character $\chi $. Then
every
homogeneous weight on $R$ is of the form%
\begin{equation}
w:R\rightarrow
\mathbb{R}
,\text{ }x\mapsto \gamma \left[ 1-\frac{1}{\left\vert R^{\times
}\right\vert }\underset{u\in R^{\times }}{\sum }\chi (xu)\right] .
\label{chi}
\end{equation}%
\qquad
\end{proposition}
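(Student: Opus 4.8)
The plan is to take Theorem~\ref{weight} as the starting point: it already asserts that every homogeneous weight has the form $w(x)=\gamma\bigl[1-\mu(0,Rx)/|R^{\times}x|\bigr]$, so the entire task reduces to establishing the single arithmetic identity
\begin{equation*}
\frac{\mu(0,Rx)}{|R^{\times}x|}=\frac{1}{|R^{\times}|}\sum_{u\in R^{\times}}\chi(xu)\qquad\text{for all }x\in R,
\end{equation*}
after which the two descriptions of $w$ coincide term by term. Thus I would not reprove the existence/uniqueness part at all; I would only convert the M\"{o}bius description of the weight into a character sum using the generating character $\chi$.

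First I would record the defining orthogonality property of a generating character: for any nonzero ideal $I\subseteq R$ the restriction $\chi|_I$ is a \emph{nontrivial} additive character, since the kernel of a generating character contains no nonzero ideal; hence $\sum_{a\in I}\chi(a)=0$, whereas the degenerate sum over $I=\{0\}$ equals $1$. Next I would set $f(Rx)=\sum_{a:\,Ra=Rx}\chi(a)$ and $g(Rx)=\sum_{a\in Rx}\chi(a)$ on the poset of principal left ideals ordered by inclusion. Because every element of $Rx$ generates a unique principal left ideal contained in $Rx$, we have $g(I)=\sum_{J\le I}f(J)$, and the M\"{o}bius inversion recalled just before Theorem~\ref{weight} yields $f(I)=\sum_{J\le I}\mu(J,I)\,g(J)$. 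By the orthogonality step $g(J)$ vanishes except at the bottom element $J=\{0\}$, where it equals $1$, so the sum collapses to the single surviving term $f(Rx)=\mu(0,Rx)$. This is the crucial bridge: $\sum_{a:\,Ra=Rx}\chi(a)=\mu(0,Rx)$.

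It then remains to pass from the sum over the associate class $\{a:Ra=Rx\}$ to the sum $\sum_{u\in R^{\times}}\chi(xu)$ appearing in \eqref{chi}. Here I would run a fiber-counting argument: the map $u\mapsto xu$ carries $R^{\times}$ onto the set of associates $R^{\times}x$, each value being attained exactly $m:=|R^{\times}|/|R^{\times}x|$ times, since the fiber over a fixed associate is a coset of the stabilizer $\{u\in R^{\times}:(u-1)x=0\}$ (orbit--stabilizer for the unit group acting by multiplication). Consequently $\sum_{u\in R^{\times}}\chi(xu)=m\sum_{z\in R^{\times}x}\chi(z)=m\,f(Rx)=m\,\mu(0,Rx)$, and dividing by $|R^{\times}|$ returns exactly $\mu(0,Rx)/|R^{\times}x|$, which proves the identity and hence the proposition; the case $x=0$ is immediate and recovers $w(0)=0$.

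I expect the main obstacle to be the bookkeeping that identifies the associate class $\{a:Ra=Rx\}$ with $R^{\times}x$ and justifies the constancy of the fiber size. In the commutative local setting relevant to $R_k$ this is transparent, since $Ra=Rx$ holds precisely when $a=ux$ for a unit $u$; for a general Frobenius ring, however, one must invoke the full left--right symmetry encoded in the generating character to reconcile the left-ideal quantity $\mu(0,Rx)$ with the right multiplications $\chi(xu)$, and it is exactly this compatibility where the Frobenius hypothesis does its real work.
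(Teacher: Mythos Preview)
The paper does not actually prove this proposition: it is quoted verbatim from \cite{Honold} and stated without any accompanying argument, so there is no in-paper proof to compare your proposal against. Your outline is essentially the standard derivation (originating with Honold and Greferath--Schmidt): use the orthogonality of the generating character on nonzero ideals together with M\"{o}bius inversion on the poset of principal left ideals to obtain $\sum_{a:\,Ra=Rx}\chi(a)=\mu(0,Rx)$, and then unwind the fiber count for $u\mapsto xu$. The argument is sound; the one genuine subtlety you already flag---identifying the left-associate class $\{a:Ra=Rx\}$ with the right orbit $xR^{\times}$ in the noncommutative case---is exactly where the Frobenius hypothesis (via the left/right symmetry of the socle and the generating character) is needed, and Honold's original proof handles it by that route.
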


By Property (H2), the average weight of a left principal ideal of $R$ is $%
\gamma .$ The following proposition shows that for any coset of
either a left or a right ideal, the average weight is $\gamma .$
This property is equivalent to $R$ being Frobenius.

\begin{proposition}[{\cite[p.~412]{Greferath2}}]
Let $I$ be either a left or a right ideal of a finite Frobenius ring
$R$, and let $y\in R.$ Then $\underset{r\in I+y}{\sum }w(r)=\gamma
\left\vert I\right\vert .$
\end{proposition}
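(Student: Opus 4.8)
The plan is to substitute the character expression for $w$ from Proposition \ref{charac} and show that only the constant term survives. Writing $w(r)=\gamma\bigl[1-\tfrac{1}{|R^{\times}|}\sum_{u\in R^{\times}}\chi(ru)\bigr]$ and summing over the coset $I+y$, the constant part contributes $\gamma\,|I+y|=\gamma\,|I|$ because translation by $y$ preserves cardinality. Hence it will suffice to prove that the character contribution vanishes, i.e.
\[
\sum_{r\in I+y}\ \sum_{u\in R^{\times}}\chi(ru)=0,
\]
and I would establish this by showing that the inner sum $\sum_{r\in I+y}\chi(ru)$ is already zero for each fixed unit $u$. (Here I assume $I\neq\{0\}$; the degenerate case $I=\{0\}$ is genuinely excluded, since then the left-hand side is $w(y)$ while the right-hand side is $\gamma$.)

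Next I would fix $u\in R^{\times}$ and write each $r\in I+y$ as $r=x+y$ with $x\in I$. Since $\chi$ is a character of the additive group, $\chi(ru)=\chi(xu)\chi(yu)$, so
\[
\sum_{r\in I+y}\chi(ru)=\chi(yu)\sum_{x\in I}\chi(xu)=\chi(yu)\sum_{w\in Iu}\chi(w),
\]
where $Iu=\{xu:x\in I\}$ and I have used that right multiplication by the unit $u$ is a bijection of $R$. The key structural observation is that $Iu$ is again a nonzero one-sided ideal of the same side as $I$: if $I$ is a right ideal then $Iu=I$ (as $Iu\subseteq I$ and $|Iu|=|I|$), whereas if $I$ is a left ideal then $a(xu)=(ax)u\in Iu$ for all $a\in R$, so $Iu$ is a nonzero left ideal.

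It then remains to show that $\sum_{w\in J}\chi(w)=0$ for every nonzero one-sided ideal $J$ of $R$; this is the heart of the argument. Restricting $\chi$ to the finite abelian group $(J,+)$ yields an additive character, and by orthogonality the sum equals $|J|$ if $\chi|_{J}$ is trivial and $0$ otherwise. Now $\chi|_{J}$ is trivial exactly when $J\subseteq\ker\chi$, and here I would invoke the defining property of a generating character of a Frobenius ring: $\ker\chi$ contains no nonzero one-sided ideal (this holds on both sides, since a generating character is two-sided). As $J\neq\{0\}$, we get $J\not\subseteq\ker\chi$, so $\chi|_{J}$ is nontrivial and the sum vanishes. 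Applying this with $J=Iu$ makes each inner sum zero, the double sum collapses, and what remains is $\sum_{r\in I+y}w(r)=\gamma\,|I|$, as claimed.

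The main obstacle is precisely this last step — knowing that the generating character is nontrivial on every nonzero one-sided ideal — together with the bookkeeping needed to treat the ``left or right'' ideal cases uniformly; everything else reduces to orthogonality of characters and to $u$ being invertible. For the commutative rings $R_k$ that are the focus of the paper the left/right distinction disappears, and the argument becomes a single computation.
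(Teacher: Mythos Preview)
The paper does not actually supply a proof of this proposition: it is quoted as a known result from \cite{Greferath2} (with what appears to be a page-number typo), so there is no in-paper argument to compare against. Your proof is correct and is essentially the standard argument via the generating-character formula of Proposition~\ref{charac}. The only substantive caveat you already flagged yourself: the statement, as printed, fails for $I=\{0\}$ (take $y=0$), so nonzeroness of $I$ must be assumed. One cosmetic point: avoid reusing $w$ as the dummy summation variable in $\sum_{w\in Iu}\chi(w)$, since $w$ already denotes the homogeneous weight in this section.
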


\section{The Homogeneous weight and the related Gray map for $R_k$}
\subsection{The Homogeneous weight for $R_k$}
We will find the homogeneous weight for the ring family $R_k$ using
Proposition \ref{charac}. We recall from \cite{Rk} that the
following is a generating character for the Frobenius rings $R_k$:
\begin{equation}
\chi(\sum_{A \subseteq \{ 1,2,\dots,k \}} c_A u_A)  =
(-1)^{wt(\overline{c})},
\end{equation}
where by $wt(\overline{c})$, we mean the Hamming weight of the $\F_2$-coordinate vector of the element in the basis $\{u_A|A \subseteq \{1,2,\dots, k\}\}$.

For example, for the case of $k=2$, we have
$\chi (0)=1$

$\chi (1)=\chi (u)=\chi (v)=\chi (uv)=-1$

$\chi (1+u)=\chi (1+v)=\chi (1+uv)=\chi (u+v)=\chi (u+uv)=\chi
(v+uv)=1$

$\chi (1+u+v)=\chi (1+u+uv)=\chi (1+v+uv)=\chi (u+v+uv)=-1$

$\chi (1+u+v+uv)=1$.

The following lemma will be a key in proving the main theorem about
the homogeneous weight on $R_k$:
\begin{lemma}\label{lem}
Let $x$ be any element in $R_k$ such that $x\neq 0$ and $x\neq u_1u_2\cdots u_k$. Then
$$\sum_{\alpha \in \mathcal{U}(R_k)}\chi(\alpha x) = 0.$$
\end{lemma}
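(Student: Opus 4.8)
The plan is to reduce the character sum over units to a character sum over the maximal ideal, and then to annihilate that sum by producing a single non-unit that detects the non-triviality of the induced character. First I would use the identity $\mathcal{U}(R_k) = \mathcal{D}(R_k) + 1$ from \eqref{units}, together with the fact that $\chi$ is a homomorphism of the additive group (indeed $\chi(a+b) = \chi(a)\chi(b)$, since over $\F_2$ the coordinate weight of a sum differs from the sum of the weights by an even number). Writing each unit as $1 + d$ with $d \in \mathcal{D}(R_k)$, I get $\sum_{\alpha \in \mathcal{U}(R_k)} \chi(\alpha x) = \sum_{d \in \mathcal{D}(R_k)} \chi(x + dx) = \chi(x)\sum_{d \in \mathcal{D}(R_k)} \chi(dx)$. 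Since $\chi(x) = \pm 1$ is never zero, it suffices to prove that $\sum_{d \in \mathcal{D}(R_k)} \chi(dx) = 0$, i.e.\ to work with the additive group $\mathcal{D}(R_k) = \langle u_1,\dots,u_k\rangle$ rather than with the units.

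Next I would evaluate $\chi$ on the products $u_B x$ for nonempty $B \subseteq \{1,\dots,k\}$. Writing $x = \sum_A c_A u_A$, the relation $u_i^2 = 0$ kills every term with $A \cap B \neq \emptyset$, leaving $u_B x = \sum_{A \cap B = \emptyset} c_A u_{A \cup B}$. Reindexing by $C = A \cup B \supseteq B$ shows that the $\F_2$-coordinate vector of $u_B x$ (in the basis $\{u_C\}$) is supported on distinct positions and has weight $f(B^c) := \#\{A \subseteq B^c : c_A = 1\}$, where $B^c = \{1,\dots,k\}\setminus B$. Hence $\chi(u_B x) = (-1)^{f(B^c)}$.

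The main obstacle is to exhibit one non-unit $d_0$ with $\chi(d_0 x) = -1$; once this is done, the sum vanishes by a pairing argument, since $d \mapsto d + d_0$ is a fixed-point-free involution of the additive group $\mathcal{D}(R_k)$ (of even order $2^{2^k - 1}$) grouping the terms into pairs with $\chi(dx) + \chi((d+d_0)x) = \chi(dx)\,(1 + \chi(d_0 x)) = 0$. To build $d_0$ I would use both hypotheses: $x \neq 0$ and $x \neq u_1 u_2 \cdots u_k$ together force some proper subset of $\{1,\dots,k\}$ to carry a nonzero coefficient (otherwise $x$ would be $0$ or $u_1 u_2 \cdots u_k$). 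I then choose $A_0$ minimal by inclusion among the proper subsets $A$ with $c_A = 1$. Minimality guarantees that $A_0$ is the only subset of itself with nonzero coefficient, so $f(A_0) = 1$ is odd; taking $B = \{1,\dots,k\} \setminus A_0$, which is nonempty because $A_0$ is proper, yields the non-unit $d_0 = u_B \in \mathcal{D}(R_k)$ with $\chi(d_0 x) = (-1)^{f(A_0)} = -1$.

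I expect the delicate points to be purely combinatorial: the coordinate-weight computation of $u_B x$, and the observation that $0$ and $u_1 u_2 \cdots u_k$ are exactly the elements admitting no such proper $A_0$ --- precisely the two elements for which $d \mapsto \chi(dx)$ is the trivial character and the sum degenerates to $|\mathcal{D}(R_k)| \neq 0$ instead of $0$. This is consistent with Lemma \ref{unituv}, since $u_1 u_2 \cdots u_k$ is annihilated by every non-unit and thus makes every summand equal to $\chi(0) = 1$.
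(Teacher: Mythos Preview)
Your argument is correct, and it follows a genuinely different path from the paper's proof. Both proofs open with the same bijection $\mathcal{U}(R_k)=1+\mathcal{D}(R_k)$ and the additivity of $\chi$, but they diverge at the crucial step. The paper first invokes the generating-character identity $\sum_{\alpha\in R_k}\chi(\alpha x)=0$ to obtain $(1+\chi(x))F(x)=0$, then splits into cases according to the sign of $\chi(x)$; in the hard case $\chi(x)=-1$ it argues indirectly, using the socle structure of $R_k$ (every nonzero ideal contains $u_1\cdots u_k$) together with Lemma~\ref{unituv} to locate a non-unit $\beta$ with $\beta x=u_1\cdots u_k$, contradicting $\chi(\beta x)=1$. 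You instead reduce in one stroke to $\sum_{d\in\mathcal{D}(R_k)}\chi(dx)=0$, avoid any case distinction, and replace the socle argument by an explicit combinatorial construction: choosing a minimal proper $A_0$ with $c_{A_0}=1$ and setting $d_0=u_{A_0^c}$ gives $\chi(d_0x)=-1$ directly, after which the translation $d\mapsto d+d_0$ pairs the sum to zero. Your route is more self-contained---it needs neither Lemma~\ref{unituv} nor the fact that $\langle u_1\cdots u_k\rangle$ is the unique minimal ideal---while the paper's route is more structural and would transfer verbatim to any local Frobenius ring with a two-element socle.
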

\begin{proof}
Since $\chi$ is a generating character, it is non-trivial when restricted to any non-zero ideal, and thus we have,
\begin{equation}
\sum_{\alpha \in R_k}\chi(\alpha x) = 0.
\end{equation}
But, by \ref{units}, since $\alpha \in \mathcal{U}(R_k)$ if and only if $1+\alpha \in \mathcal{D}(R_k)$, and
$$\chi((\alpha+1)x) = \chi(\alpha x+x) = \chi(\alpha x)\cdot \chi(x),$$
the above sum becomes
\begin{equation}\label{eq}
0 = \sum_{\alpha \in \mathcal{U}(R_k)}\chi(\alpha x)+\sum_{\alpha \in\mathcal{R_k}}\chi(\alpha x)\chi(x) = (1+\chi(x))\sum_{\alpha \in \mathcal{U}(R_k)}\chi(\alpha x).
\end{equation}
The proof is done if $\chi(x) = 1$. Now, assume $\chi(x) = -1$. Let us label the sum:
$$F(x) = \sum_{\alpha \in \mathcal{U}(R_k)}\chi(\alpha x).$$
As $\alpha$ runs through all the units of $R_k$, we can easily observe that
$F(x) = F(\beta x)$ for all $\beta \in \mathcal{U}(R_k)$.
Thus (\ref{eq}) can be written as
\begin{equation}
(1+\chi(\alpha x))F(x) = 0, \:\:\: \forall \alpha \in \mathcal{U}(R_k).
\end{equation}
So, the proof will be complete if we prove that $\chi(\alpha x) =1$ for at least one value of $\alpha \in \mathcal{U}(R_k)$. Assume to the contrary that $\chi(\alpha x) = -1$ for all $\alpha \in \mathcal{U}(R_k)$. But, then this means we must have $\chi(\beta x) = 1$ for all $\beta \in \mathcal{D}(R_k)$. Now, the ideal generated by $x$ must contain $u_1u_2 \dots u_k$. Since $x \neq u_1u_2 \cdots u_k$, $\alpha x \neq u_1u_2 \dots u_k$
for any $\alpha \in \mathcal{U}(R_k)$ by Lemma \ref{unituv}. Thus we must have $\beta x = u_1u_2 \dots u_k$ for some $\beta \in \mathcal{D}(R_k)$. But this is a contradiction since $\chi(u_1u_2 \dots u_k) = -1$.
\end{proof}

We are now ready to describe the homogeneous weight for $R_k$:
\begin{theorem}
The homogeneous weight on $R_k$ is found to be:
\begin{equation*}
w_{\hom }(x)=\left\{
\begin{array}{c}
0 \\
2\gamma  \\
\gamma
\end{array}%
\begin{array}{c}
\text{if}~x=0 \\
\text{if }x=u_1u_2 \dots u_k \\
\text{otherwise.}%
\end{array}%
\right.
\end{equation*}%
\end{theorem}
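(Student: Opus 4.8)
The plan is to apply Proposition~\ref{charac} directly, taking $R = R_k$ and $R^{\times} = \mathcal{U}(R_k)$. By (\ref{units}) we have $\abs{\mathcal{U}(R_k)} = 2^{2^k-1}$, so the homogeneous weight must take the form
\begin{equation*}
w_{\hom}(x) = \gamma\left[1 - \frac{1}{2^{2^k-1}}F(x)\right], \qquad F(x) := \sum_{\alpha \in \mathcal{U}(R_k)}\chi(\alpha x).
\end{equation*}
Everything then reduces to evaluating the character sum $F(x)$, and I would split the argument into exactly the three cases appearing in the statement, according to the value of $F(x)$.

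For $x = 0$, each summand is $\chi(0) = 1$, so $F(0) = \abs{\mathcal{U}(R_k)} = 2^{2^k-1}$ and hence $w_{\hom}(0) = \gamma[1-1] = 0$, as required. For the generic case $x \neq 0$ and $x \neq u_1u_2\cdots u_k$, Lemma~\ref{lem} gives $F(x) = 0$ immediately, so that $w_{\hom}(x) = \gamma[1 - 0] = \gamma$. Both of these cost essentially nothing once Lemma~\ref{lem} is in hand.

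The one case requiring genuine care is $x = u_1u_2\cdots u_k$, and this is where I expect the only real obstacle, since Lemma~\ref{lem} deliberately excludes it. Here I would argue directly: by Lemma~\ref{unituv}(i), multiplying $u_1u_2\cdots u_k$ by any unit $\alpha$ fixes it, so $\alpha x = u_1u_2\cdots u_k$ for every $\alpha \in \mathcal{U}(R_k)$. Consequently every summand equals the single value $\chi(u_1u_2\cdots u_k)$, and the key arithmetic input is that $\chi(u_1u_2\cdots u_k) = -1$, because the coordinate vector of $u_1u_2\cdots u_k = u_{\{1,\dots,k\}}$ in the basis $\{u_A\}$ has Hamming weight $1$, which is odd. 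This yields $F(x) = -2^{2^k-1}$ and therefore $w_{\hom}(x) = \gamma[1 - (-1)] = 2\gamma$, completing the three-way split. The subtlety worth flagging is simply keeping straight that in this last case the character is evaluated at the fixed element $u_1u_2\cdots u_k$ rather than at varying arguments, which is precisely what Lemma~\ref{unituv}(i) guarantees.
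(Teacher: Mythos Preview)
Your proposal is correct and follows essentially the same route as the paper's own proof: apply Proposition~\ref{charac}, use Lemma~\ref{unituv}(i) together with $\chi(u_1u_2\cdots u_k)=-1$ for the special element, and invoke Lemma~\ref{lem} for the generic nonzero case. If anything, your write-up is slightly more explicit than the paper's, since you spell out the $x=0$ case and the reason $\chi(u_1u_2\cdots u_k)=-1$.
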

\begin{proof}
Suppose $x = u_1u_2 \dots u_k$. Then by Lemma \ref{unituv}, $\alpha x = x$ for all $\alpha \in \mathcal{U}(R_k)$. Thus $\chi(\alpha x) = -1$ for all  $\alpha \in \mathcal{U}(R_k)$. Hence, by Proposition \ref{charac}, we have
$$w_{\hom}(x) = \gamma \left[ 1-\frac{1}{\left\vert \mathcal{U}(R_k)\right \vert }\underset{\alpha\in \mathcal{U}(R_k)}{\sum }(-1)\right] = 2\gamma.$$

If $x\neq 0$ and $x\neq u_1u_2 \dots u_k$, then by Lemma \ref{lem}, we have $\sum_{\alpha \in R_k}\chi(\alpha x) = 0$. Thus
we obtain
$$w_{\hom}(x) = \gamma \left[ 1-\frac{1}{\left\vert \mathcal{U}(R_k)\right\vert}  0 \right] = \gamma.$$
\end{proof}

\subsection{The Gray map for the homogeneous weights}
We will assign a numerical value to $\gamma$ so that it is possible to define a distance-preserving Gray map. In \cite{Pasa}, hyperplanes in projective geometries were used to define a Gray map for the homogeneous weight on $R_k$. Consequently, the choice of $\gamma$ that was imposed by the combinatorial structure was found to be $2^{2^{k}-2}$. A corresponding Gray map from $R_k$ to $\F_2^{2^{2^k-1}}$ was described using the hyperplanes in $PG_{2^{k}-1}(\mathbb{F}_{2})$.

We will adopt the same choice of $\gamma$, but instead of the combinatorial description of the Gray map we will give an algebraic description which uses a well-known family of binary codes, namely first order Reed-Muller codes.

Thus for us the homogeneous weight on $R_k$ is now defined as follows:
\begin{equation}
w_{\hom }(x)=\left\{
\begin{array}{c}
0 \\
2^{2^k-1} \\
2^{2^k-2}
\end{array}
\begin{array}{c}
\text{if}~x=0 \\
\text{if }x=u_1u_2\cdots u_k \\
\text{otherwise.}
\end{array}
\right.
\end{equation}

Let us recall that first order Reed-Muller codes, denoted by $RM(1,m)$ have the following well-known properties:
\begin{enumerate}
\item $RM(1,m)$ is a binary linear code of length $2^m$ and of dimension $m+1$. It contains the all 1-vector of length $2^m$.
\item The minimum weight of $RM(1,m)$ is $2^{m-1}$.
\item Every nonzero codeword other than the all 1-vector has weight $2^{m-1}$. In other words the weight enumerator of $RM(1,m)$ is given by $1+(2^{m+1}-2)z^{2^{m-1}}+z^{2^m}$.
\end{enumerate}

Notice that $R_k$ can be viewed as an $\F_2$-vector space with a basis consisting of $\{u_A| A\subseteq \{1,2, \dots, k\}\}$. The basis has $2^k$ elements. Then if we take $RM(1,2^k-1)$, and define an $\F_2$-linear map $\psi_k: R_k \rightarrow RM(1,2^k-1)$ in such a way that the basis elements are mapped to the basic generating vectors of $RM(1,2^k-1)$, the map $\psi$ will satisfy the following property:
\begin{theorem}
The map $\psi_k$ defined above is a distance preserving isometry from $(R_k, \:\: homogeneous \:\: distance)$ to $(\F_2^{2^{2^k-1}}, \:\: Hamming \:\: Distance)$.
\end{theorem}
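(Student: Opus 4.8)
The plan is to reduce the isometry claim to an element-by-element comparison of weights and then read the answer off the weight enumerator of $RM(1,2^k-1)$. First I would observe that $\psi_k$ is an $\mathbb{F}_2$-linear \emph{bijection}: the domain $R_k$ and the codomain $RM(1,2^k-1)$ both have $\mathbb{F}_2$-dimension $2^k$ (the latter since $\dim RM(1,m)=m+1$ with $m=2^k-1$), and $\psi_k$ carries the basis $\{u_A : A\subseteq\{1,\dots,k\}\}$ onto the $2^k$ generating vectors of $RM(1,2^k-1)$, which form a basis. Because $\psi_k$ is linear and we work in characteristic $2$, for any $x,y\in R_k$ we have $\psi_k(x)-\psi_k(y)=\psi_k(x-y)$, while the homogeneous distance is $d_{\hom}(x,y)=w_{\hom}(x-y)$ and the Hamming distance is $d_H(\psi_k(x),\psi_k(y))=\mathrm{wt}(\psi_k(x-y))$. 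Hence it suffices to prove the single-variable identity $\mathrm{wt}(\psi_k(x))=w_{\hom}(x)$ for every $x\in R_k$.

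Next I would match the two weight distributions. By the preceding theorem the homogeneous weight takes the value $0$ only at $x=0$, the value $2^{2^k-1}$ only at $x=u_1u_2\cdots u_k$, and the value $2^{2^k-2}$ on each of the remaining $2^{2^k}-2$ elements. On the code side, substituting $m=2^k-1$ into the weight enumerator recalled above gives $1+(2^{2^k}-2)z^{2^{2^k-2}}+z^{2^{2^k-1}}$: exactly one codeword (the zero word) of weight $0$, exactly one (the all-ones word $\mathbf{1}$) of weight $2^{2^k-1}$, and $2^{2^k}-2$ codewords of weight $2^{2^k-2}$. The two distributions coincide as multisets, so the only way the isometry can fail is a mismatch on the two distinguished elements $0$ and $u_1u_2\cdots u_k$.

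I would then pin down the images of these two elements. Linearity gives $\psi_k(0)=0$, of weight $0=w_{\hom}(0)$. The crucial point — and the step I expect to be the main obstacle — is to verify that $\psi_k(u_1u_2\cdots u_k)=\mathbf{1}$, i.e. that the generating vector assigned to the top basis element $u_{\{1,\dots,k\}}$ is precisely the all-ones codeword (the evaluation of the constant function $1$). Note that this is the \emph{only} assignment compatible with an isometry: were $u_1\cdots u_k$ sent to any other generator it would acquire weight $2^{2^k-2}\neq 2^{2^k-1}=w_{\hom}(u_1\cdots u_k)$, so the construction of $\psi_k$ must, and does, place $\mathbf{1}$ as the image of $u_{\{1,\dots,k\}}$. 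Once this is checked we obtain $\mathrm{wt}(\psi_k(u_1\cdots u_k))=2^{2^k-1}=w_{\hom}(u_1\cdots u_k)$.

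Finally I would dispose of the generic case. For any $x\neq 0,\,u_1\cdots u_k$ we have $w_{\hom}(x)=2^{2^k-2}$; since $\psi_k$ is a bijection and the preimages of the two special codewords $0$ and $\mathbf{1}$ are exactly $0$ and $u_1\cdots u_k$, the codeword $\psi_k(x)$ is nonzero and distinct from $\mathbf{1}$, hence has weight exactly $2^{2^k-2}$ by the weight-enumerator property. Thus $\mathrm{wt}(\psi_k(x))=w_{\hom}(x)$ for all $x$, which by the first paragraph yields the claimed isometry. The whole argument hinges on the single verification $\psi_k(u_1\cdots u_k)=\mathbf{1}$; everything else is bookkeeping with the dimension count and the Reed–Muller weight enumerator.
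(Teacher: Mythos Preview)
Your proof is correct and is precisely the argument the paper has in mind: the paper gives no formal proof at all, but the three Reed--Muller facts it lists immediately before the theorem are exactly the ingredients you use (dimension $m+1=2^k$, constant nonzero weight $2^{m-1}=2^{2^k-2}$ off the all-ones word, and the all-ones word of full weight). You are also right to isolate the one genuine requirement---that $\psi_k(u_1\cdots u_k)=\mathbf{1}$---which the paper leaves implicit in the phrase ``basic generating vectors'' and only confirms through the $k=2$ example where $\psi_2(uv)=(1,\dots,1)$.
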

\begin{example}
For $k=1$, we get the ring to be $R_1 = \F_2+u\F_2$ and in this case the homogeneous weight coincides with the usual Lee weight that was defined in \cite{Rk}. The map $\psi$ is the usual Gray map defined in the same work, namely $\psi_1(0) = (0,0)$, $\psi_1(1) = (0,1)$, $\psi_1(u) = (1,1)$ and $\psi_1(1+u) = (1,0)$.
\end{example}
\begin{example}
For the case when $k=2$, we describe $R_2$ as $\F_2+u\F_2+v\F_2+uv\F_2$, to go by the original notation used in \cite{R2}. The homogeneous weight
is given by
\begin{equation}
w_{\hom }(x)=\left\{
\begin{array}{c}
0 \\
8\\
4
\end{array}
\begin{array}{l}
\text{if}~x=0 \\
\text{if }x=uv \\
\text{otherwise.}
\end{array}
\right.
\end{equation}
\end{example}
The map $\psi_2$ in this case can be described by assigning the basis elements as follows:
$$\psi_2(uv) = (1,1,1,1,1,1,1,1), \:\:\:\: \psi_2(u) = (1,1,1,1,0,0,0,0)$$
$$\psi_2(v) = (1,1,0,0,1,1,0,0), \:\:\:\: \psi_2(1) = (1,0,1,0,1,0,1,0).$$

The maps $\psi_k$ are naturally extended (component-wise) to $R_k^n$ as well. This allows us to consider the $\psi_k$-images of codes over $R_k$ as well.
Thus we have the following theorem:
\begin{theorem}
Let $C$ be a linear code over $R_k$ of length $n$. Then $\psi_k(C)$ is a binary linear code of length $2^{2^k-1}n$. Moreover the homogeneous weight distribution of $C$ is the same as the Hamming weight distribution of $\psi_k(C)$.
\end{theorem}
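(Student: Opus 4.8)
The plan is to separate the two assertions — linearity of the image and equality of the weight distributions — and to reduce both to the single-symbol statement already established in the preceding theorem, namely that $\psi_k\colon R_k \to \F_2^{2^{2^k-1}}$ is an $\F_2$-linear isometry between the homogeneous weight and the Hamming weight.

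For the linearity and the length, I would first note that since $\psi_k$ is $\F_2$-linear on $R_k$, its component-wise extension to $R_k^n$ is again $\F_2$-linear and sends each of the $n$ coordinates to a block of $2^{2^k-1}$ binary coordinates, so $\psi_k(C) \subseteq \F_2^{2^{2^k-1}n}$. Because $R_k$ has characteristic $2$ and contains $\F_2$, any $R_k$-submodule $C$ of $R_k^n$ is in particular an $\F_2$-subspace: closure under addition already forces closure under the (trivial) scalar action of $\F_2$. The image of an $\F_2$-subspace under an $\F_2$-linear map is again an $\F_2$-subspace, and an $\F_2$-subspace of $\F_2^{N}$ is precisely a binary linear code of length $N$; taking $N = 2^{2^k-1}n$ settles the first claim.

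For the weight distributions, I would upgrade the single-symbol isometry to symbol-by-symbol weight preservation. Since the homogeneous distance is $d_{\hom}(x,y) = w_{\hom}(x+y)$ (using characteristic $2$) and the Hamming distance of the images is $w_H(\psi_k(x)+\psi_k(y)) = w_H(\psi_k(x+y))$ by linearity, specializing the isometry at $y=0$ gives $w_{\hom}(z) = w_H(\psi_k(z))$ for every $z \in R_k$. Both the homogeneous weight on $R_k^n$ and the Hamming weight on $\F_2^{2^{2^k-1}n}$ are sums of their per-coordinate contributions, and $\psi_k$ acts independently on each coordinate, so for any $\mathbf{x} = (x_1,\dots,x_n)$ I would conclude $w_{\hom}(\mathbf{x}) = \sum_i w_{\hom}(x_i) = \sum_i w_H(\psi_k(x_i)) = w_H(\psi_k(\mathbf{x}))$.

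Finally, $\psi_k$ is injective because $w_{\hom}(z)=0$ only for $z=0$, so the kernel of the $\F_2$-linear map $\psi_k$ is trivial and the same holds for its component-wise extension; hence $\psi_k$ restricts to a weight-preserving bijection $C \to \psi_k(C)$. Matching codewords of equal weight on the two sides then yields, for every value $w$, an equality between the number of codewords of $C$ of homogeneous weight $w$ and the number of codewords of $\psi_k(C)$ of Hamming weight $w$. The only point needing care — and the closest thing to an obstacle — is the bookkeeping that lets the single-symbol isometry bootstrap additively to all of $R_k^n$ while simultaneously preserving linearity; everything else is a direct consequence of the previously stated theorem.
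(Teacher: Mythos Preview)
Your proposal is correct and simply spells out the routine details that the paper leaves implicit: the paper states this theorem without proof, treating it as an immediate consequence of the preceding isometry result together with the component-wise extension of $\psi_k$. Your verification of $\F_2$-linearity, coordinate-wise additivity of the weights, and injectivity is exactly the argument one would supply.
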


Note that when $k\geq 2$, all the homogeneous weights are divisible by $4$. Considering the binary images we get the following observation:
\begin{theorem}\label{self-orth}
Let $C$ be any linear code over $R_k$ of length $n$ with $k\geq 2$. Then $\psi_k(C)$ is a self-orthogonal binary linear code of length $2^{2^k-1}n$.
\end{theorem}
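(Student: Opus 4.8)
The plan is to reduce self-orthogonality to a divisibility property of the weights. First I would invoke the preceding theorem, which already guarantees that $\psi_k(C)$ is a binary linear code and that the Hamming weight of each codeword $\psi_k(\mathbf{c})$ equals the homogeneous weight of the corresponding $\mathbf{c}\in C$. Since the homogeneous weight of a vector in $R_k^n$ is the sum of its componentwise homogeneous weights, and since by the displayed formula each nonzero component weight is either $2^{2^k-2}$ or $2^{2^k-1}$, both of which are divisible by $4$ precisely when $k\geq 2$, I would conclude that every codeword of $\psi_k(C)$ has Hamming weight divisible by $4$; that is, $\psi_k(C)$ is a doubly-even binary code.

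The second and decisive step is to show that any doubly-even binary linear code is self-orthogonal. Here I would use the elementary identity, valid for binary vectors $x,y$ of the same length,
\begin{equation*}
\mathrm{wt}(x)+\mathrm{wt}(y)-\mathrm{wt}(x+y)=2\,(x\ast y),
\end{equation*}
where $x\ast y$ denotes the number of coordinates in which both $x$ and $y$ equal $1$ (equivalently the integer-valued inner product, whose parity is the $\F_2$-inner product $\langle x,y\rangle$). Given two codewords $x,y\in\psi_k(C)$, linearity ensures $x+y\in\psi_k(C)$, so all three weights on the left-hand side are divisible by $4$. The identity then forces $2\,(x\ast y)\equiv 0\pmod 4$, hence $x\ast y$ is even, i.e. $\langle x,y\rangle=0$ in $\F_2$. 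As $x,y$ were arbitrary, $\psi_k(C)\subseteq\psi_k(C)^{\perp}$, which is exactly self-orthogonality.

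The argument is essentially routine once the divisibility is in place; the only point requiring care is the threshold $k\geq 2$, which is precisely what makes both possible nonzero weights $2^{2^k-2}$ and $2^{2^k-1}$ divisible by $4$ (for $k=1$ they are $1$ and $2$, and indeed $\psi_1$ yields the ordinary Gray image, which need not be self-orthogonal). I do not anticipate a genuine obstacle: the linearity of $\psi_k(C)$ and the isometry property have already been established, so the main content is the standard fact that $4\mid\mathrm{wt}$ implies self-orthogonality, obtained from the weight identity above.
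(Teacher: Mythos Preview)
Your proposal is correct and matches the paper's approach exactly: the paper simply remarks that for $k\geq 2$ all homogeneous weights are divisible by $4$ and states the theorem without further proof, relying on the well-known fact that a doubly-even binary linear code is self-orthogonal. You have merely supplied the standard weight-identity argument that the paper leaves implicit.
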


\begin{remark}
The size of a linear code over $R_k$ of length $n$ is at most $2^{2^kn}$. An inductive argument shows that $k+1 < 2^k-1$ for all $k\geq 3$. Thus, for $k\geq 3$, if $C$ is a linear code over $R_k$ of length $n$, then
\begin{align*}
|\psi_k(C)| & \leq 2^{2^kn} = 2^{2^{k+1}.n/2} \\
& < 2^{2^{2^{k-1}}.n/2} = \large \sqrt{|\F_2^{2^{2^k-1}.n}|}.
\end{align*}
This shows that the $\psi_k$-image of a code over $R_k$ cannot be self-dual for any $k\geq 3$. On the other hand, since
$$|R_2^n| = 2^{4n} = \sqrt{2^{8n}}$$ and by Theorem \ref{self-orth}, $\psi_2(R_2^n)$ is self-orthogonal, we see that
$\psi_2(R_2^n)$ is a Type II (i.e. all weights are divisible by 4) binary self-dual code of length $n$ and minimum distance $4$  for all $n\geq 1$. When $n=1$, we get a $[16,8,4]$ extremal Type II binary self-dual code.
\end{remark}

\section{Quasi-twisted codes and their images}
We describe a special class of codes over $R_k$ with respect to the homogeneous distance. Cyclic codes are a special class of codes, which possess an algebraic structure that allows them to be encoded and decoded easily and it also provides more information about the code. Cyclic codes over $R_2$ were studied in \cite{CycR2} and later in more generality in \cite{cycRk} with respect to the Lee metric.

Cyclic codes have two possible generalizations. One is obtained by replacing the ordinary shift with a $\lambda$-shift, which results in $\lambda$-constacyclic codes. The other possible generalization is achieved through replacing the shift with a composition of the shift, thus giving rise to the so-called quasicyclic codes. Constacyclic and quasicyclic codes over $R_2$ were studied in \cite{constaR2} and \cite{QCR2} respectively with respect to the Lee metric. Recently, the concepts of constacyclic and quaciyclic codes have been combined to give rise to a new generalization that generalizes all these concepts, and is called quasitwisted. Quasitwisted codes have been studied for their help in finding many good codes over different alphabets. For the rest of the paper, quasitwisted code will shortly be denoted as QT codes while quasicyclic codes will be denoted by QC codes.

\begin{definition}
Let $R$ be a commutative ring with identity and suppose that $\lambda\in R$ is a unit. A $\lambda$-shift on $R^n$ is defined to be the map $T_{\lambda}$ with the property:
$$T_{\lambda}(a_0,a_1, \dots, a_{n-1}) = (\lambda a_{n-1},a_0,a_1, \dots, a_{n-2}).$$
\end{definition}
When $\lambda=1$ we simply denote it by $T$ and we mean it to be the cyclic shift. This leads to the following definition on codes over $R$:
\begin{definition}
Let $C$ be a linear code over $R$ of length $n$. We say, $C$ is a cyclic code if $T(C) = C$, a $\lambda$-constacyclic code if $T_{\lambda}(C) = C$. For $\ell|n$, we say $C$ is an $\ell$-QC code if $T^{\ell}(C) = C$, and it is a $(\lambda,\ell)$-QT code if $T_{\lambda}^{\ell}(C) = C.$ $\ell$ is called the index of QT codes while $n/\ell$ is referred to as the co-index of the QT codes.
\end{definition}
Structural properties of QT codes over finite fields have been given in \cite{Ackerman}, \cite{Aydin} and \cite{Chen}. We recall that to any vector $(a_0,a_1, \dots, a_{n-1}) \in R^n$, we can assign a polynomial $a_0+a_1x+\dots a_{n-1}x^{n-1} \in R[x]$. This correspondence leads to the following well-known theorems:
\begin{theorem}
{\bf (i)} $C$ is a $\lambda$-constacyclic code over $R$ of length $n$ if and only if the polynomial correspondence of $C$ is an ideal in $R[x]/(x^n-\lambda)$.
\par {\bf (ii)} Suppose $n = \ell \cdot m $. Then a $(\lambda, \ell)$-QT code over $R$ of length $n$ algebraically is an $\left (R[x]/(x^m-\lambda)\right )$-submodule of $\left (R[x]/(x^m-\lambda)\right )^{\ell}$.
\end{theorem}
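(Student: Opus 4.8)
The plan is to establish both parts by translating the defining shift-invariance conditions into the language of quotient rings, exploiting the standard correspondence between vectors and polynomials that the paper has just set up.

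For part (i), I would start from the observation that assigning to each codeword $(a_0,\dots,a_{n-1})$ the polynomial $a_0 + a_1 x + \cdots + a_{n-1}x^{n-1}$ turns the $\lambda$-shift $T_\lambda$ into multiplication by $x$ modulo $x^n - \lambda$. Concretely, if $a(x)$ corresponds to $(a_0,\dots,a_{n-1})$, then $x\cdot a(x) = a_0 x + \cdots + a_{n-2}x^{n-1} + a_{n-1}x^n$, and reducing $x^n \equiv \lambda$ in $R[x]/(x^n-\lambda)$ produces exactly $\lambda a_{n-1} + a_0 x + \cdots + a_{n-2}x^{n-1}$, which is the polynomial of $T_\lambda(a_0,\dots,a_{n-1})$. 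Thus $C$ being closed under $T_\lambda$ (the constacyclic condition $T_\lambda(C)=C$) is equivalent to the image of $C$ being closed under multiplication by $x$; since $C$ is already an $R$-submodule, closure under multiplication by $x$ is equivalent to closure under multiplication by every polynomial, i.e.\ to being an ideal of $R[x]/(x^n-\lambda)$. I would present the two implications symmetrically, using that $\{1,x,x^2,\dots,x^{n-1}\}$ generates the quotient as an $R$-algebra.

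For part (ii), the idea is the analogous but \emph{block} version with $m$ playing the role that $n$ played above. Writing $n = \ell m$, I would group the coordinates of a vector in $R^n$ into $\ell$ interleaved tracks, so that $R^n$ is identified with $\left(R[x]/(x^m-\lambda)\right)^{\ell}$; under this identification the $\ell$-fold iterated shift $T_\lambda^{\ell}$ corresponds to simultaneous multiplication by $x$ on each of the $\ell$ coordinates. The condition $T_\lambda^{\ell}(C)=C$ then becomes closure under multiplication by $x$, and combined with the $R$-module structure this is exactly the statement that the image of $C$ is an $\left(R[x]/(x^m-\lambda)\right)$-submodule of $\left(R[x]/(x^m-\lambda)\right)^{\ell}$.

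The step requiring the most care will be fixing the correct interleaving convention in part (ii) and checking that $T_\lambda^{\ell}$ really acts as multiplication by $x$ \emph{on each track simultaneously} rather than as a single cyclic reshuffle of all $n$ coordinates; one must verify that applying $T_\lambda$ a total of $\ell$ times advances each of the $\ell$ subsequences by one position and correctly feeds back the wrapped-around entry with the factor $\lambda$. Once the indexing is pinned down, the verification is a routine bookkeeping computation; the conceptual content is entirely in the identification $R^{\ell m} \cong \left(R[x]/(x^m-\lambda)\right)^{\ell}$. I note that both statements are standard in the literature on QT codes over finite fields (e.g.\ \cite{Aydin}, \cite{Chen}), and the proofs carry over verbatim since nothing beyond $\lambda$ being a unit in the commutative ring $R$ is used.
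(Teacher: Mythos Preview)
Your proposal is correct and follows the standard argument. Note, however, that the paper does not actually supply a proof of this theorem: it is introduced with the sentence ``This correspondence leads to the following well-known theorems'' and stated without proof, with structural references to the QT literature (\cite{Ackerman}, \cite{Aydin}, \cite{Chen}) given just before. Your sketch is precisely the canonical argument found in those sources, and your caution about fixing the interleaving convention in part~(ii) is well placed---the identification of $T_\lambda^{\ell}$ with simultaneous multiplication by $x$ on the $\ell$ tracks indeed requires the interleaved (not block) partition of coordinates.
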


The following theorem, a special case of which was proved in \cite{constaR2}, can easily be proved:
\begin{theorem}
Let $R$ be a finite commutative ring with identity and let $\lambda$ be a unit in $R$ with $\lambda^2=1$. If $n$ is odd, the map
$$\mu: R[x]/(x^n-1) \rightarrow R[x]/(x^n-\lambda)$$
given by $\mu(f(x)) = f(\lambda x)$ is a ring isomorphism.
\end{theorem}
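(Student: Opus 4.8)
The plan is to realize $\mu$ as the map induced on quotients by the substitution homomorphism $x \mapsto \lambda x$, and then to exhibit an explicit two-sided inverse given by the same formula in the opposite direction. First I would record the elementary but crucial fact that $\lambda^n = \lambda$: writing $n = 2m+1$ and using $\lambda^2 = 1$ gives $\lambda^n = (\lambda^2)^m \lambda = \lambda$. This is the only place the hypothesis that $n$ is odd enters the argument.

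Next I would check that $\mu$ is well-defined. Substitution $x \mapsto \lambda x$ is a ring endomorphism $\Phi$ of $R[x]$, and composing it with the projection onto $R[x]/(x^n-\lambda)$ yields a ring homomorphism $\tilde\Phi : R[x] \to R[x]/(x^n-\lambda)$. To see that $\tilde\Phi$ factors through $R[x]/(x^n-1)$ it suffices to show $x^n - 1 \in \ker \tilde\Phi$. Indeed $\Phi(x^n - 1) = \lambda^n x^n - 1 = \lambda x^n - 1$, and modulo $x^n-\lambda$ we may replace $x^n$ by $\lambda$, obtaining $\lambda\cdot\lambda - 1 = \lambda^2 - 1 = 0$. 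Hence $\mu$ descends to a well-defined map on $R[x]/(x^n-1)$; being induced by a substitution homomorphism composed with a projection, it is automatically a ring homomorphism.

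For bijectivity I would introduce the map $\nu : R[x]/(x^n-\lambda) \to R[x]/(x^n-1)$ given by the same rule $g(x) \mapsto g(\lambda x)$. The identical computation, now using $\Phi(x^n - \lambda) = \lambda x^n - \lambda$ and replacing $x^n$ by $1$ modulo $x^n - 1$, shows $\nu$ is well-defined. It then remains to verify that $\mu$ and $\nu$ are mutually inverse: for any $g$, setting $h(x) = g(\lambda x)$ one has $\mu(\nu(g)) = h(\lambda x) = g(\lambda^2 x) = g(x)$ since $\lambda^2 = 1$, and symmetrically $\nu(\mu(f)) = f$. Thus $\mu$ is a bijective ring homomorphism, i.e.\ a ring isomorphism.

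There is no serious obstacle here; the whole argument hinges on the two identities $\lambda^n = \lambda$ (which forces well-definedness in both directions) and $\lambda^2 = 1$ (which forces the two substitutions to undo one another). The only point requiring a little care is the bookkeeping of what ``substitute $x \mapsto \lambda x$'' means when it is applied to a polynomial already of the form $g(\lambda x)$, so I would name the substitution endomorphism $\Phi$ explicitly in order to avoid any ambiguity in evaluating the composition $\mu \circ \nu$.
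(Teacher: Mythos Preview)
Your proof is correct. The paper itself does not supply a proof of this theorem; it merely states that a special case was proved in \cite{constaR2} and that the general version ``can easily be proved.'' Your argument --- realizing $\mu$ as the map induced by the substitution endomorphism $x\mapsto\lambda x$, checking well-definedness via the identity $\lambda^n=\lambda$ for odd $n$, and exhibiting the explicit inverse $\nu$ given by the same substitution --- is exactly the standard verification one would expect, so there is nothing to compare it against and no gap to report.
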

Since in $R_k$, all units $\lambda$, satisfy $\lambda^2=1$, we get the following corollary:
\begin{corollary}
Let $n$ be odd and $\lambda \in R_k$ be any unit. Then $C$ is a $\lambda$-constacyclic code over $R_k$ of length $n$ if and only if $C$ is a cyclic code over $R_k$ of length $n$.
\end{corollary}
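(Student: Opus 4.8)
The plan is to obtain this as an immediate consequence of the preceding theorem, once one structural fact about $R_k$ is recorded. First I would note that $R_k$ is a finite commutative ring with identity and that every unit $\lambda \in R_k$ satisfies $\lambda^2 = 1$; this is precisely the content of \eqref{unitsquare}, which asserts $a^2 = 1$ for every unit $a$. Consequently, for any unit $\lambda \in R_k$ and any odd $n$, the hypotheses of the previous theorem are met, so the map $\mu : R_k[x]/(x^n-1) \to R_k[x]/(x^n-\lambda)$ defined by $\mu(f(x)) = f(\lambda x)$ is a ring isomorphism.

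Next I would translate this algebraic isomorphism into the language of codes by means of the polynomial/ideal characterization (part (i) of the preceding correspondence theorem, applied both with $\lambda = 1$ and with the given $\lambda$). Under the standard identification of a vector with its associated polynomial, cyclic codes of length $n$ over $R_k$ are exactly the ideals of $R_k[x]/(x^n-1)$, while $\lambda$-constacyclic codes of length $n$ are exactly the ideals of $R_k[x]/(x^n-\lambda)$. Since a ring isomorphism and its inverse both carry ideals bijectively to ideals, $\mu$ restricts to a one-to-one correspondence between the ideals of the two quotient rings, that is, between cyclic and $\lambda$-constacyclic codes of length $n$. Reading off $\mu$ on coordinates, the image of a code is obtained by the rescaling $(a_0, a_1, \dots, a_{n-1}) \mapsto (a_0, \lambda a_1, \dots, \lambda^{n-1} a_{n-1})$, and the equivalence between the cyclic and constacyclic conditions is to be understood through this identification.

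Since the substantive part, namely that $\mu$ is a well-defined ring isomorphism for odd $n$, is already supplied by the theorem I am invoking, there is no genuine obstacle here; the corollary is essentially a bookkeeping translation. The one point I would emphasize is that the isomorphism theorem requires $\lambda^2 = 1$, a condition that in an arbitrary ring is met only by involutory units, whereas in $R_k$ it holds for \emph{every} unit by \eqref{unitsquare}. This single observation is what allows the equivalence to be asserted for an arbitrary unit $\lambda \in R_k$, and it is the only ingredient that must be verified before appealing to the theorem.
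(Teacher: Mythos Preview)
Your proposal is correct and follows exactly the paper's approach: the paper simply records that every unit in $R_k$ satisfies $\lambda^2=1$ (by \eqref{unitsquare}) and then invokes the preceding ring-isomorphism theorem, which is precisely what you do. Your write-up is in fact more explicit than the paper's one-line justification, including the useful observation that the correspondence on vectors is the diagonal rescaling $(a_0,a_1,\dots,a_{n-1})\mapsto(a_0,\lambda a_1,\dots,\lambda^{n-1}a_{n-1})$.
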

Considering the algebraic structure of the QT codes and the above ring isomorphism, we also get the following result about QT codes over $R_k$:
\begin{corollary}
Let $n =\ell \cdot m$, where $m$ is odd and $\lambda \in R_k$ is any unit. Then $C$ is a $(\lambda,\ell)$-QT code over $R$ if and only if $C$ is an $\ell$-QC code.
\end{corollary}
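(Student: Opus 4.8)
The plan is to deduce the statement from the algebraic description of QT codes together with the odd-length constacyclic-versus-cyclic corollary proved just above. Write $n=\ell m$ with $m$ odd. By the structural theorem for QT codes, a $(\lambda,\ell)$-QT code over $R_k$ is an $R_k[x]/(x^m-\lambda)$-submodule of $\left(R_k[x]/(x^m-\lambda)\right)^{\ell}$, and an $\ell$-QC code is an $R_k[x]/(x^m-1)$-submodule of $\left(R_k[x]/(x^m-1)\right)^{\ell}$; in both cases the defining shift ($T_\lambda^\ell$, respectively $T^\ell$) is realized as multiplication by $x$ in the co-index ring. First I would make this realization explicit by grouping the $n$ coordinates into the $\ell$ orbits of the shift-by-$\ell$ map, namely the $\ell$ blocks of length $m$ consisting of coordinates with a fixed residue modulo $\ell$, so that $C$ being a $(\lambda,\ell)$-QT code becomes invariance under the length-$m$ $\lambda$-shift acting on each block, and $C$ being $\ell$-QC becomes invariance under the length-$m$ cyclic shift on each block.

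Next I would bring in the ring isomorphism. Since $m$ is odd and every unit of $R_k$ satisfies $\lambda^2=1$, the preceding theorem provides the ring isomorphism $\mu\colon R_k[x]/(x^m-1)\to R_k[x]/(x^m-\lambda)$, $\mu(f(x))=f(\lambda x)$, which is precisely the isomorphism underlying the corollary that a $\lambda$-constacyclic code of odd length $m$ is the same as a cyclic code of length $m$. Extending $\mu$ coordinatewise to $\overline{\mu}$ on the $\ell$ co-index blocks identifies the $R_k[x]/(x^m-1)$-module structure with the $R_k[x]/(x^m-\lambda)$-module structure, and hence matches the blockwise cyclic shift with the blockwise $\lambda$-shift. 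Feeding this into the odd-length corollary, the co-index of $C$ is $\lambda$-constacyclic exactly when it is cyclic, which yields that $C$ is a $(\lambda,\ell)$-QT code if and only if $C$ is an $\ell$-QC code.

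The step I expect to be the main obstacle is the bridge in the first paragraph: one must verify carefully that, under the interleaving of the $n$ coordinates into the $\ell$ length-$m$ blocks, the operator $T_\lambda^\ell$ genuinely acts as the length-$m$ $\lambda$-shift on each block, i.e. that exactly one factor of $\lambda$ is accumulated each time a block wraps around (because the corresponding length-$n$ trajectory crosses the zero coordinate exactly once per period). Getting this bookkeeping right is what guarantees that the co-index structure is honestly $\lambda$-constacyclic of odd length $m$, so that the already-established odd-length corollary applies verbatim; once that identification is in place, the remainder is a direct transfer through $\overline{\mu}$.
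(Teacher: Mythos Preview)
Your proposal is correct and follows essentially the same approach as the paper: the paper simply remarks that the result follows ``considering the algebraic structure of the QT codes and the above ring isomorphism,'' and your argument is precisely an unpacking of that sentence---identifying a $(\lambda,\ell)$-QT code with an $\bigl(R_k[x]/(x^m-\lambda)\bigr)$-submodule of the $\ell$-fold product, and then transporting the structure through the coordinatewise extension of $\mu(f(x))=f(\lambda x)$. Your added care about the interleaving bookkeeping (why $T_\lambda^{\ell}$ really is the length-$m$ $\lambda$-shift on each block) is exactly the detail the paper suppresses, so there is no divergence in method.
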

The above corollaries imply that we do not need to consider constacyclic codes over $R_k$ of odd lengths and $QT$-codes of odd coindex as they are cyclic and QC respectively.

\subsection{The Binary Images of QT-codes over $R_k$}
We start by observing that
$$\psi_k(a_0,a_1, \dots, a_{n-1}) = (\psi_k(a_0), \psi_k(a_1), \dots, \psi_k(a_{n-1}))$$ for all $a_i \in R_k$. Note that $\psi(a_i)$ is a binary vector of size $2^{2^k-1}$. Thus we have
\begin{align*}
\psi_k \circ T(a_0,a_1, \dots, a_{n-1})& = \psi_k(a_{n-1}, a_0, a_1, \dots, a_{n-2}) \\
& = (\psi_k(a_{n-1}), \psi_k(a_0), \dots, \psi_k(a_{n-2})) \\
& = T^{2^{2^k-1}} \circ \psi_k (a_0,a_1, \dots, a_{n-1}).
\end{align*}
In other words, we have
\begin{equation}\label{cycshiftimage}
\psi_k \circ T = T^{2^{2^k-1}}\circ \psi_k.
\end{equation}
This leads to the following theorem:
\begin{theorem}\label{psiImageQC}
If $C$ is a cyclic code over $R_k$ of length $n$, then $\psi_k(C)$ is a binary $2^{2^k-1}$-QC code of length $2^{2^k-1}n$. If $C$ is an $\ell$-QC code over $R_k$ of length $n$, then $\psi_k(C)$ is a binary $(2^{2^k-1}\cdot \ell)$-QC code of length $2^{2^k-1}n$.
\end{theorem}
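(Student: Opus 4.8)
The plan is to leverage the commutation relation (\ref{cycshiftimage}) between $\psi_k$ and the cyclic shift, which reduces both statements to a bookkeeping exercise about how $T$ interacts with $\psi_k$. Throughout, write $s = 2^{2^k-1}$, so that $\psi_k$ sends a codeword of length $n$ over $R_k$ to a binary word of length $sn$, and (\ref{cycshiftimage}) reads $\psi_k\circ T = T^{s}\circ\psi_k$. First I would record that $\psi_k(C)$ is a binary \emph{linear} code, which was already established, so that it only remains to verify the appropriate shift-invariance together with the divisibility condition $\ell \mid N$ required by the definition of an $\ell$-QC code of length $N$.

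The key preliminary step is to iterate (\ref{cycshiftimage}) to obtain $\psi_k\circ T^{j} = T^{js}\circ\psi_k$ for every $j\geq 0$. This follows by a one-line induction: assuming $\psi_k\circ T^{j} = T^{js}\circ\psi_k$, one computes $\psi_k\circ T^{j+1} = (\psi_k\circ T)\circ T^{j} = (T^{s}\circ\psi_k)\circ T^{j} = T^{s}\circ(\psi_k\circ T^{j}) = T^{s}\circ T^{js}\circ\psi_k = T^{(j+1)s}\circ\psi_k$, the base case being exactly (\ref{cycshiftimage}).

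For the first assertion, suppose $C$ is cyclic, i.e. $T(C)=C$. Applying $\psi_k$ and using (\ref{cycshiftimage}) gives $\psi_k(C) = \psi_k(T(C)) = T^{s}(\psi_k(C))$, so $\psi_k(C)$ is invariant under $T^{s}$. Since $s \mid sn$, this exhibits $\psi_k(C)$ as an $s$-QC code of length $sn$, as claimed. For the second assertion, suppose $C$ is $\ell$-QC, so $\ell\mid n$ and $T^{\ell}(C)=C$. Using the iterated relation with $j=\ell$ yields $\psi_k(C) = \psi_k(T^{\ell}(C)) = T^{\ell s}(\psi_k(C))$, so $\psi_k(C)$ is invariant under $T^{\ell s}$; and since $\ell\mid n$ forces $\ell s\mid ns$, the index $\ell s = 2^{2^k-1}\cdot\ell$ divides the length $sn$, so $\psi_k(C)$ is a $(2^{2^k-1}\cdot\ell)$-QC code of length $2^{2^k-1}n$.

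There is essentially no genuine obstacle here; the content is already packaged in (\ref{cycshiftimage}), and the only points requiring care are the index arithmetic and checking that the divisibility constraint in the definition of a QC code is met in each case, which it is precisely because $\ell\mid n$ holds for QC codes over $R_k$. One could alternatively phrase the argument by saying that $\psi_k$ conjugates the shift $T$ into the power $T^{s}$, so that any $T^{\ell}$-invariant code is carried to a $T^{\ell s}$-invariant one; the direct iteration above simply makes this precise.
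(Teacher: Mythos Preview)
Your argument is correct and is precisely the approach the paper takes: the theorem is stated immediately after establishing the commutation relation (\ref{cycshiftimage}) as a direct consequence, and you have merely spelled out the iteration and the divisibility checks that the paper leaves implicit.
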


Now, let $\lambda$ be any unit in $R_k$. Since $\lambda (u_1u_2 \cdots u_k) = u_1u_2 \cdots u_k $ by Lemma \ref{unituv}, we have $w_{\hom}(a) = w_{\hom}(\lambda\cdot a)$ for all units $\lambda$ and elements $a$ in $R_k$. But this means that for any unit $\lambda \in R_k$ and any element $a\in R_k$, $\psi_k(\lambda \cdot a)$ is permutation equivalent to $\psi_k(a)$. This means that equation \ref{cycshiftimage} will have the following form
\begin{equation}
\psi_k \circ T_{\lambda}(a_0,a_1, \dots, a_{n-1}) \simeq T^{2^{2^k-1}}\circ \psi_k (a_0,a_1, \dots, a_{n-1}),
\end{equation}
where $\simeq$ denotes permutation-equivalence.
Thus we have the following version of Theorem \ref{psiImageQC}:
\begin{theorem}
If $C$ is a $\lambda$-constacyclic code over $R_k$ of length $n$, then $\psi_k(C)$ is equivalent to a binary $2^{2^k-1}$-QC code of length $2^{2^k-1}n$. If $C$ is a $(\lambda,\ell)$-QT code over $R_k$ of length $n$, then $\psi_k(C)$ is equivalent to a a binary $(2^{2^k-1}\cdot \ell)$-QC code of length $2^{2^k-1}n$.
\end{theorem}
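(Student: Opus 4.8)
The plan is to imitate the proof of Theorem \ref{psiImageQC}, the only genuinely new ingredient being the unit $\lambda$ that occupies the wrap-around coordinate of $T_\lambda$. Throughout, write $N:=2^{2^k-1}$ for the block length of a single $\psi_k$-image. First I would upgrade the commutation relation $\psi_k\circ T=T^{N}\circ\psi_k$ of (\ref{cycshiftimage}) into an exact operator identity for the $\lambda$-shift. The essential local fact is that left multiplication $M_\lambda\colon a\mapsto\lambda a$ by a fixed unit is an $\F_2$-linear bijection of $R_k$ which, by Lemma \ref{unituv}, fixes both $0$ and $u_1u_2\cdots u_k$ and hence preserves the homogeneous weight $w_{\hom}$. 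Since $\psi_k$ carries $w_{\hom}$ to the Hamming weight and maps $R_k$ bijectively and isometrically onto $RM(1,2^k-1)$, the conjugate $\Phi_\lambda:=\psi_k\circ M_\lambda\circ\psi_k^{-1}$ is a linear Hamming isometry of $RM(1,2^k-1)$ onto itself.

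By the MacWilliams equivalence theorem, every linear Hamming isometry between binary linear codes is induced by a monomial transformation, which over $\F_2$ is simply a coordinate permutation; thus there is a single permutation $\pi_\lambda$ of the $N$ block coordinates with $\psi_k(\lambda a)=\pi_\lambda\bigl(\psi_k(a)\bigr)$ for every $a\in R_k$. This is the uniform (code-level) version of the per-element permutation-equivalence asserted just before the theorem. Writing $T_\lambda=M\circ T$, where $M$ multiplies only the $0$th coordinate of $R_k^n$ by $\lambda$, I then get the clean identity $\psi_k\circ T_\lambda=S\circ\psi_k$ with $S:=\Pi_\lambda\circ T^{N}$, where $\Pi_\lambda$ applies $\pi_\lambda$ to the first length-$N$ block and fixes the other $n-1$ blocks. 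Hence, if $T_\lambda(C)=C$, then $\psi_k(C)=\psi_k\bigl(T_\lambda(C)\bigr)=S\bigl(\psi_k(C)\bigr)$, so $\psi_k(C)$ is invariant under the block-twisted shift $S$; the $(\lambda,\ell)$-QT case is identical with $T^{N\ell}$ replacing $T^{N}$ and the twist placed on the wrapped block, giving invariance under $S_\ell:=\Pi_\lambda\circ T^{N\ell}$.

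The final step, passing from invariance under the twisted shift $S$ to permutation-equivalence with an honest $N$-QC (that is, $T^{N}$-invariant) code, is the \textbf{main obstacle}. When the coindex $n/\ell$ is odd this step is immediate and fully rigorous: the isomorphism $\mu$ of the preceding corollary identifies the $\lambda$-constacyclic (resp. $(\lambda,\ell)$-QT) code $C$ with a cyclic (resp. $\ell$-QC) code, and Theorem \ref{psiImageQC} then already delivers a genuine QC image with no surviving twist. The difficulty is the even-coindex case, where $S$ and $T^{N}$ need not be conjugate in the symmetric group at all: since $\lambda^2=1$ forces $\pi_\lambda$ to be an involution, a direct computation gives $S^{n}=\Delta_{\pi_\lambda}$, the simultaneous twist of all $n$ blocks, so that $S$ has order $2n$ whereas $T^{N}$ has order $n$.

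I would therefore not expect any code-independent conjugation to convert $S$-invariance into $T^{N}$-invariance. The only additional leverage is the symmetry that $\psi_k(C)$ inherits from $C$ being an $R_k$-module, namely that scalar multiplication by $\lambda$ makes $\psi_k(C)$ invariant under $\Delta_{\pi_\lambda}$; but since $\Delta_{\pi_\lambda}=S^{n}$ already lies in $\langle S\rangle$, this symmetry adds nothing new, and $T^{N}\notin\langle S\rangle$. The crux is thus to produce a single coordinate permutation $\sigma$, possibly depending on $C$, for which $\sigma\bigl(\psi_k(C)\bigr)$ becomes $T^{N}$-invariant; in the absence of such a $\sigma$ the honest conclusion is only that $\psi_k(C)$ is a binary quasi-twisted code with permutation-twist $\pi_\lambda$, and the ``equivalent to a QC code'' reading is guaranteed precisely in the odd-coindex regime (and the $\lambda=1$ case) that the numerical search actually exploits.
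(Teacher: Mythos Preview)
Your approach is precisely the paper's, carried out with the rigor the paper omits. The paper's entire argument is the short paragraph preceding the theorem: multiplication by a unit $\lambda$ preserves $w_{\hom}$, hence $\psi_k(\lambda a)$ is ``permutation equivalent'' to $\psi_k(a)$, hence $\psi_k\circ T_\lambda\simeq T^{N}\circ\psi_k$; the theorem is then simply asserted. The paper never distinguishes per-vector from uniform permutation equivalence, and never explains how a relation on individual vectors yields a code-level equivalence. Your invocation of the MacWilliams theorem to pin down a single $\pi_\lambda$, and the resulting exact identity $\psi_k\circ T_\lambda=(\Pi_\lambda\circ T^{N})\circ\psi_k$, are genuine additions beyond what the paper supplies.

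Consequently, the obstacle you isolate in the even-coindex case is not a defect of your plan but a gap the paper itself leaves open: nothing in the paper converts $S$-invariance into permutation-equivalence with a $T^{N}$-invariant code, and your order computation ($S^{n}=\Delta_{\pi_\lambda}\neq\mathrm{id}$ when $\lambda\neq1$, so $S$ has order $2n$ while $T^{N}$ has order $n$) shows that no code-independent conjugation can do this. You are also right that the odd-coindex case follows cleanly from the earlier ring isomorphism $f(x)\mapsto f(\lambda x)$ together with Theorem~\ref{psiImageQC}, and that this regime covers all of the paper's numerical work. The honest conclusion you reach---that in full generality $\psi_k(C)$ is a priori only a permutation-twisted quasicyclic code---is exactly as much as the paper's argument actually establishes.
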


In the case of $R_2$, which is the most common case we will use in our constructions, we get the following corollary:
\begin{corollary}
Let $C$ be an  $\ell$-QC code of length $n$ over $R_2$. Then $\psi_2(C)$ is a binary $8\ell$-QC code of length $8n$. If $C$ is a $(\lambda,\ell)$-QT code of length $n$ over $R_2$, then $\psi_2(C)$ is permutation-equivalent to a binary $8\ell$-QC  code of length $8n$.
\end{corollary}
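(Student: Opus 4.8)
The plan is to recognize this corollary as nothing more than the $k=2$ instance of the two general theorems immediately preceding it, so the proof amounts to specializing their statements rather than developing anything new. First I would record the relevant arithmetic: for $k=2$ we have $2^{2^k-1} = 2^{2^2-1} = 2^{4-1} = 2^3 = 8$, and hence the output length $2^{2^k-1}n$ becomes $8n$ while the QC index $2^{2^k-1}\cdot\ell$ becomes $8\ell$. With this single computation in hand, both clauses of the corollary read off directly.

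For the first assertion, I would invoke Theorem \ref{psiImageQC}, which guarantees that the $\psi_k$-image of an $\ell$-QC code over $R_k$ of length $n$ is a binary $(2^{2^k-1}\cdot\ell)$-QC code of length $2^{2^k-1}n$. Substituting $k=2$ and using the arithmetic above yields immediately that $\psi_2(C)$ is a binary $8\ell$-QC code of length $8n$, and crucially with \emph{equality} of the QC index (not mere equivalence), exactly as the statement claims. This case is clean because the ordinary cyclic shift is transported by $\psi_k$ to an honest power of the binary shift, as recorded in equation (\ref{cycshiftimage}).

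For the second assertion about $(\lambda,\ell)$-QT codes, I would appeal to the theorem stated just before the corollary, whose QT clause asserts that $\psi_k(C)$ is equivalent to a binary $(2^{2^k-1}\cdot\ell)$-QC code of length $2^{2^k-1}n$. The permutation-equivalence there arises precisely because $\psi_k(\lambda\cdot a)$ is only permutation-equivalent to $\psi_k(a)$ rather than equal to it; this in turn rests on Lemma \ref{unituv}, which gives $\lambda(u_1u_2\cdots u_k)=u_1u_2\cdots u_k$ and hence $w_{\hom}(a)=w_{\hom}(\lambda a)$ for every unit $\lambda$. Specializing to $k=2$ again replaces $2^{2^k-1}$ by $8$, giving that $\psi_2(C)$ is permutation-equivalent to a binary $8\ell$-QC code of length $8n$.

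Since the entire argument is a direct substitution into already-established results, there is no genuine obstacle here; the only point requiring care is to keep the two regimes straight. The plain QC case produces an honest QC code, whereas the QT case produces only a \emph{permutation-equivalent} one, the distinction being caused by the unit $\lambda$ whose action on the coordinates is a nontrivial permutation rather than the identity. I would therefore make sure the final statement faithfully preserves this difference in strength between its two clauses, which it does.
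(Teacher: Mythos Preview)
Your proposal is correct and matches the paper's own treatment: the corollary is stated without proof immediately after the two general theorems, so it is meant to be read as the $k=2$ specialization you describe. Your computation $2^{2^2-1}=8$ and the distinction you draw between the exact QC conclusion and the permutation-equivalence QT conclusion are exactly the content needed.
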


\subsection{One-Generator QT codes}
QT codes are structurally complex codes and as such, in their literature the most common types of such codes that have been considered are the so-called one-generator QT codes. The QC, cyclic and constacyclic analogues can easily be considered. Assume that $g(x) \in R[x]/(x^m-\lambda)$ is a polynomial, with $\lambda$ a unit. Then the one-gerenator $\lambda$-constacyclic code generated by $g(x)$ is simply defined to be the principal ideal $\langle g(x)\rangle$ in the ring $R[x]/(x^m-\lambda)$. It is clear that such a code will be generated by the following matrix:
$$G = \left[
\begin{array}{ccccc}
g_{0} & g_{1} & g_{2} & \cdots & g_{m-1} \\
\lambda g_{m-1} & g_{0} & g_{1} & \cdots & g_{m-2} \\
\lambda g_{m-2} & \lambda g_{m-1} & g_{0} & \cdots & g_{m-3} \\
\vdots & \vdots & \vdots & \ddots & \vdots \\
\lambda g_{1} & \lambda g_{2} & \lambda g_{3} & \cdots & g_{0}%
\end{array}%
\right],$$
where $g(x) = g_0+g_1x+ \cdots g_{m-1}x^{m-1}$. In some contexts such a matrix is called a $\lambda$-twistulant matrix or $\lambda$-circulant matrix. When $\lambda=1$, we simply get a circulant matrix as the generator matrix of a one-generator cyclic code.

\begin{definition}
A one-generator $(\lambda, \ell)$-QT code over $R$ is a linear code over $R$ generated by a matrix of the form
$$[G_1|G_2| \cdots |G_{\ell}],$$
where each $G_i$ is an $m\times m$ $\lambda$-twistulant matrix.
\end{definition}
The following theorem, whose cyclic analogue is proved in \cite{QCR2}, is easily obtained.
\begin{theorem}\label{constacyclic}
Let  $C=\langle g(x) \rangle$ be a $\lambda$-constacyclic code over $R_k$ of length $m$ where $g(x)$ is a monic polynomial with  $\deg(g(x))=m-k$. Then $C$ is a free module of rank $k$  if and only if $g(x)|x^m-\lambda$.
\end{theorem}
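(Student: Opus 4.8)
The plan is to prove the two implications of the stated equivalence, using the correspondence between $\lambda$-constacyclic codes and ideals of the quotient ring $S := R_k[x]/(x^m-\lambda)$ established earlier. Throughout I would keep in mind that $R_k$ is a finite commutative local Frobenius ring, so that divisibility and module-freeness arguments behave essentially as in the field case once we track the non-unit coefficients carefully. I note that the notation is slightly unusual: $C$ is said to be free of rank $k$ while $\deg(g)=m-k$, so here $k$ denotes the dimension of the code, not the parameter of the ring family; I would either rename it or flag this to avoid collision with the ring index $R_k$.

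First I would prove the direction $g(x)\mid x^m-\lambda \Rightarrow C$ free of rank $k$. Assuming $x^m-\lambda = g(x)h(x)$ in $R_k[x]$ with $g$ monic of degree $m-k$, I would show that the images of $g(x), xg(x), \dots, x^{k-1}g(x)$ form a free $R_k$-basis for the ideal $\langle g(x)\rangle$ in $S$. Spanning is immediate from the division algorithm, which is available because $g$ is \emph{monic} (so one may divide any polynomial by $g$ over the commutative ring $R_k$ without inverting a leading coefficient). For $R_k$-linear independence I would argue that a nontrivial relation $\sum_{i=0}^{k-1} c_i x^i g(x) \equiv 0 \pmod{x^m-\lambda}$ forces $\big(\sum c_i x^i\big) g(x)$ to be a multiple of $x^m-\lambda = g(x)h(x)$; since $g$ is monic it is a non-zero-divisor in $R_k[x]$, so we may cancel $g$ and conclude $\sum c_i x^i$ is a multiple of $h(x)$, which has degree $k$, forcing all $c_i=0$ by degree considerations. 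This yields a generating matrix of $k$ rows whose rows are $R_k$-independent, i.e. a free module of rank $k$.

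For the converse, $C$ free of rank $k \Rightarrow g(x)\mid x^m-\lambda$, I would use the monic division algorithm to write $x^m-\lambda = q(x)g(x) + r(x)$ with $\deg r < \deg g = m-k$. The key observation is that $x^m-\lambda \equiv 0$ in $S$, hence $r(x) \equiv -q(x)g(x) \in C$. Thus $r(x)$ is a codeword of degree strictly less than $\deg g$. I would then show that any codeword's representative of minimal degree must have degree at least $\deg g$ by a rank/counting argument: if $C$ is free of rank $k$ with the natural spanning set coming from $g(x), xg(x), \dots$, the presence of a nonzero low-degree element $r(x)$ would produce too many independent vectors, contradicting rank $k$, and would force $r(x)=0$.

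The main obstacle I anticipate is the converse direction, specifically ruling out the possibility that $r(x)$ is a nonzero codeword. Over a field this is handled by a clean minimality/degree argument, but over the local ring $R_k$ the coefficients of $g$ need not be units, so I cannot directly speak of the \enquote{leading coefficient} generating the whole code. The safest route is to combine the freeness hypothesis with a size count: a free rank-$k$ module over $R_k$ has exactly $\abs{R_k}^k = 2^{2^k \cdot k}$ elements, and the submodule generated by $g(x), xg(x), \dots, x^{k-1}g(x)$ (whose independence I would verify as above whenever $g$ is monic) already attains this cardinality; hence these $k$ elements exhaust $C$, leaving no room for an independent low-degree codeword, which forces $r(x)=0$ and therefore $g(x)\mid x^m-\lambda$. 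I would lean on the monic hypothesis at every step, since it is precisely what makes the division algorithm and the non-zero-divisor cancellation valid over the non-field ring $R_k$.
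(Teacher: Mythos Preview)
The paper does not actually give a proof of this theorem: it simply asserts that the result ``is easily obtained'' and refers to the cyclic analogue in \cite{QCR2}. So there is nothing in the paper to compare your argument against beyond that remark.

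Your overall strategy is correct and is the standard one. Two small points are worth tightening. First, in the forward direction your spanning argument should divide an arbitrary multiplier $f(x)$ by the monic cofactor $h(x)$ (so that $f(x)g(x)\equiv r(x)g(x)$ with $\deg r<k$), not by $g(x)$ itself; your wording suggests the latter. Second, in the converse direction you justify the $R_k$-independence of $g, xg,\dots,x^{k-1}g$ by pointing ``as above,'' but the forward-direction independence argument used the factorization $x^m-\lambda=g(x)h(x)$, which is precisely what you are trying to establish. The fix is easy and you essentially have it: since $g$ is monic of degree $m-k$, any nonzero combination $\sum_{i=0}^{k-1}c_ix^ig(x)$ is a nonzero polynomial of degree strictly less than $m$, hence cannot be a multiple of $x^m-\lambda$; so independence holds without assuming divisibility. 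Your size count then gives $|C|=|R_k|^k$ equals the size of the span of $g,xg,\dots,x^{k-1}g$, forcing equality of these modules, and the low-degree remainder $r(x)$ lands in that span, which contains no nonzero element of degree below $m-k$; hence $r=0$. Your observation about the symbol $k$ colliding with the ring index in $R_k$ is also well taken; the paper indeed overloads this symbol.
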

This theorem can then be used to obtain the following result for a special type of one-generator QT codes:
\begin{theorem}
Suppose $C$ is a $(\lambda, \ell)$-QT code of length $n=m\ell$ generated by $(f_1(x)g(x), f_2(x))g(x), \cdots, f_{\ell}(x)g(x))$, where $x^m-\lambda= g(x) h(x)$ with $g(x)$ and $h(x)$ monic polynomials in $R_k[x]/(x^m-\lambda)$ and $f_i(x)$ is relatively prime to $h(x)$ for all $i=1,2, \dots, \ell$. Then $C$ is a free module with rank $m-deg(g(x))$. In other words, $\psi_k(C)$ is of dimension $2^k(m-deg(g(x))$.
\end{theorem}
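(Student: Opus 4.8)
The plan is to realize $C$ as the homomorphic image of the ambient constacyclic ring $S := R_k[x]/(x^m-\lambda)$ under a single-generator map, to identify the kernel of that map explicitly, and then to read off freeness either directly from a monomial basis or by invoking Theorem \ref{constacyclic}. Concretely, I would define the $S$-module homomorphism $\varphi\colon S \to S^{\ell}$ by $\varphi(a(x)) = (a f_1 g,\, a f_2 g,\, \dots,\, a f_{\ell} g)$. By the one-generator description of $C$, its image is exactly $C$, so $C \cong S/\ker\varphi$ as $R_k$-modules, and the entire problem reduces to proving that $\ker\varphi = \langle h \rangle$, the ideal of $S$ generated by $h$.

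The easy inclusion $\langle h \rangle \subseteq \ker\varphi$ is immediate: if $a = hb$ then each component equals $a f_i g = (gh)\, b f_i = (x^m-\lambda)\, b f_i \equiv 0$ in $S$, using $x^m-\lambda = gh$. The reverse inclusion is the heart of the argument. Suppose $a f_i g \equiv 0 \pmod{x^m-\lambda}$ for some $i$ (one index with $\gcd(f_i,h)=1$ already suffices). Lifting to $R_k[x]$, this reads $gh \mid a f_i g$. Here I would invoke the fact that a \emph{monic} polynomial is a non-zero-divisor in $R_k[x]$: if $g p = 0$ with $p \neq 0$ of leading coefficient $c\neq 0$, then $gp$ has leading coefficient $c \neq 0$ since $g$ is monic, a contradiction. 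Hence $g$ cancels from $g(a f_i - hQ)=0$, giving $h \mid a f_i$. Interpreting the hypothesis that $f_i$ is relatively prime to $h$ as comaximality, $s f_i + t h = 1$ for some $s,t$, multiplying through by $a$ and substituting $a f_i = h w$ yields $h \mid a$, i.e. $a \in \langle h\rangle$. This establishes $\ker\varphi = \langle h\rangle$.

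With the kernel in hand, the first isomorphism theorem gives $C \cong S/\langle h\rangle \cong R_k[x]/(h)$, and since $h$ is monic of degree $\deg h = m - \deg g$, the residues $1, x, \dots, x^{\deg h - 1}$ form an $R_k$-basis; thus $C$ is free of rank $m - \deg(g)$. Equivalently, the same kernel computation applied to the map $a \mapsto a g$ shows $C \cong \langle g\rangle$, the $\lambda$-constacyclic code generated by $g$, whence the rank follows from Theorem \ref{constacyclic} because $g \mid x^m-\lambda$. For the final clause, freeness of rank $m-\deg(g)$ forces $\abs{C} = \abs{R_k}^{\,m-\deg(g)} = 2^{\,2^k(m-\deg g)}$; since $\psi_k$ is an injective $\F_2$-linear map (it is a distance-preserving isometry), $\psi_k(C)$ has the same cardinality, so $\dim_{\F_2}\psi_k(C) = 2^k\bigl(m-\deg(g)\bigr)$.

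The step I expect to be the main obstacle is the kernel identification, precisely because $R_k[x]$ is neither a principal ideal domain nor even a domain: the cancellation of $g$ must be justified through the non-zero-divisor property of monic polynomials rather than by passing to fractions, and the notion of ``relatively prime'' has to be pinned down as comaximality (a B\'ezout relation) so that $h \mid a f_i$ genuinely forces $h \mid a$. Once these two ring-theoretic points are secured, the remainder is the standard isomorphism-theorem bookkeeping together with the monomial basis count.
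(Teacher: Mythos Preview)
Your proof is correct and, in fact, supplies precisely the argument the paper only gestures at. The paper provides no proof for this theorem beyond the sentence ``This theorem can then be used to obtain the following result,'' referring back to Theorem~\ref{constacyclic}; your kernel computation is exactly the work needed to make that reduction rigorous. Your two routes---the direct monomial-basis count on $R_k[x]/(h)$ and the identification $C\cong\langle g\rangle$ followed by an appeal to Theorem~\ref{constacyclic}---are both valid, and the second is what the paper evidently has in mind. Your explicit attention to the non-domain issues (cancelling the monic $g$ via the non-zero-divisor argument, and reading ``relatively prime'' as a B\'ezout relation) is warranted over $R_k$ and is the substantive content the paper omits.
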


There is a natural projection from $R_k$ to its residue field, namely, $\F_2$, which we denote by $\mu_k$. $\mu_k$ essentially works as reduction modulo the maximal ideal, that is
\begin{equation}
\mu_k(\sum_{A \subseteq \{1, \dots, k\}}c_Au_A) = c_{\emptyset}.
\end{equation}
Non-units are sent to $0$, while units are mapped to $1$.

The following lemma provides a natural interval for the minimum homogeneous weight of a code over $R_k$:
\begin{lemma}
Let $C$ be a linear code over $R_k$ and suppose the minimum Hamming weight of $\mu_k(C)$ is $d$. Then
$$2^{2^k-2}d \leq d_{hom}(C) \leq 2^{2^k-1}d.$$
\end{lemma}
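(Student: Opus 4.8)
The plan is to estimate the homogeneous weight of a codeword coordinate by coordinate, exploiting that every nonzero element of $R_k$ has homogeneous weight either $2^{2^k-2}$ or $2^{2^k-1}$, the larger value occurring only for $u_1u_2\cdots u_k$. Thus for a single coordinate $x\in R_k$ one has $w_{\hom}(x)\geq 2^{2^k-2}$ whenever $x\neq 0$, with equality unless $x=u_1u_2\cdots u_k$; moreover the projection $\mu_k$ detects exactly the unit coordinates, and each such coordinate contributes precisely $2^{2^k-2}$. The two inequalities should come from pushing these two extremes in opposite directions.

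For the upper bound I would start from a word $\bar c\in\mu_k(C)$ of minimum Hamming weight $d$ and fix any $c\in C$ with $\mu_k(c)=\bar c$, so that $c$ has a unit in each of the $d$ positions in the support of $\bar c$ and a non-unit elsewhere. The key step is to replace $c$ by the codeword $c':=(u_1u_2\cdots u_k)\,c$, which still lies in $C$ since $C$ is an $R_k$-module. By Lemma \ref{unituv}, multiplication by $u_1u_2\cdots u_k$ sends every unit coordinate of $c$ to $u_1u_2\cdots u_k$ and every non-unit coordinate to $0$; hence $c'$ equals $u_1u_2\cdots u_k$ in exactly $d$ coordinates and vanishes elsewhere. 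Therefore $c'\neq 0$ and $w_{\hom}(c')=d\cdot 2^{2^k-1}$, which yields $d_{hom}(C)\leq 2^{2^k-1}d$.

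For the lower bound I would take a nonzero codeword $c$ attaining $d_{hom}(C)$ and examine $\mu_k(c)$. When $\mu_k(c)\neq 0$ it is a nonzero word of $\mu_k(C)$, so it has at least $d$ nonzero entries; each of these marks a unit coordinate of $c$, which contributes exactly $2^{2^k-2}$, and hence $w_{\hom}(c)\geq 2^{2^k-2}d$, as required.

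The main obstacle is the complementary case, in which the minimizing codeword $c$ lies in the kernel of $\mu_k$, i.e. all of its coordinates are non-units. There the unit-counting argument sees nothing, while $c$ may have very few nonzero coordinates, so the estimate $w_{\hom}(c)\geq 2^{2^k-2}d$ does not follow from coordinatewise bounds alone. I expect this to be the crux: one must control low-weight codewords supported entirely on non-units, which appears to need an additional input (for instance that $\mu_k$ be injective on $C$), or else a reading of $d$ as the minimum $R_k$-Hamming weight of $C$ itself, in which case the coordinatewise inequalities $2^{2^k-2}\le w_{\hom}(x)\le 2^{2^k-1}$ at each nonzero coordinate deliver both bounds directly.
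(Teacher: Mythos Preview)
Your upper-bound argument is exactly the paper's: lift a minimum-weight word of $\mu_k(C)$ to some $c\in C$, multiply by $u_1u_2\cdots u_k$, and use Lemma~\ref{unituv} to see that the result has $u_1u_2\cdots u_k$ in precisely $d$ coordinates and $0$ elsewhere. For the lower bound the paper also does precisely what you do in the case $\mu_k(c)\neq 0$: it asserts that every nonzero codeword projects to a word of Hamming weight at least $d$, hence has at least $d$ unit coordinates, hence homogeneous weight at least $2^{2^k-2}d$. The paper simply does not treat the case $\mu_k(c)=0$; so your ``main obstacle'' is not a defect of your write-up relative to the paper---it is a gap the paper shares.

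Moreover, your worry is not merely formal: the lower bound as stated can fail. For $k=1$, take $C\subseteq R_1^{\,3}$ generated over $R_1=\F_2+u\F_2$ by $(1,1,1)$ and $(u,0,0)$. Then $\mu_1(C)=\{(0,0,0),(1,1,1)\}$, so $d=3$ and $2^{2^k-2}d=3$, yet $(u,0,0)\in C$ has homogeneous weight $2$, giving $d_{hom}(C)\le 2<3$. Thus the inequality you actually proved---valid for all nonzero $c$ with $\mu_k(c)\neq 0$---is the correct statement, and your suggested repair (reading $d$ as the minimum $R_k$-Hamming weight of $C$ itself) does make both bounds hold, since then every nonzero codeword has at least $d$ nonzero coordinates, each contributing at least $2^{2^k-2}$.
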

\begin{proof}
For any codeword $\overline{c} = (c_1, c_2, \dots, c_m) \in C$, the projection under $\mu_k$ has at least $d$ non-zero coordinates, which means $\overline{c}$ has at least $d$ unit coordinates, all non-zero. Thus the homogeneous weight of $\overline{c}$ is at least $d \cdot 2^{2^k-2}$, giving us the left hand inequality.

For the upper bound, suppose $(a_1,a_2, \dots, a_m) \in \mu_k(C)$ is a binary codeword in $\mu_k(C)$ whose Hamming weight is $d$. Since $\mu_k$ maps units to $1$ and non-units to $0$, this means there exists $(c_1,c_2, \dots, c_m) \in C$ such that exactly $d$ of the $c_i$s are units. But then $u_1u_2\cdots u_k(c_1,c_2, \dots, c_m) \in C$ as well, since $C$ is linear over $R_k$ and this last codeword has exactly $d$ coordinates that are equal to $u_1u_2\dots u_k$ and the rest equal to 0. Since the homogeneous weight of $u_1u_2\dots u_k$ is equal to $2^{2^k-1}$, we see that
$$w_{hom}(u_1u_2\cdots u_k(c_1,c_2, \dots, c_m)) = 2^{2^k-1}d,$$
giving us the necessary upper bound.
\end{proof}
\begin{corollary}
Let $C = ( g_1(x), g_2(x), \dots, g_{\ell}(x)) $ be a one generator $(\lambda, \ell)$-QT code over $R_k$ of length $n= m.\ell$, and suppose the number of unit coefficients of $g_i(x)$ is $d_i$ for $i=1,2,\dots, \ell$. Then
$$d_{hom}(C) \leq 2^{2^k-1}(d_1+d_2+ \dots+ d_{\ell}).$$
\end{corollary}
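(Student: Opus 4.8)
The plan is to exhibit a single explicit nonzero codeword of $C$ whose homogeneous weight equals the claimed bound; since $d_{hom}(C)$ is by definition the minimum homogeneous weight over all nonzero codewords, producing one such codeword immediately yields the upper bound. The natural candidate is obtained by scaling the generating vector by $u_1u_2\cdots u_k$, mirroring the upper-bound argument in the preceding lemma.

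First I would recall that, as a one-generator $(\lambda,\ell)$-QT code, $C$ is an $R_k[x]/(x^m-\lambda)$-submodule of $\left(R_k[x]/(x^m-\lambda)\right)^{\ell}$ containing the generating vector $(g_1(x), g_2(x), \dots, g_{\ell}(x))$. Since $u_1u_2\cdots u_k$ is a constant, hence an element of the coefficient ring $R_k$, the vector
$$\overline{c} = u_1u_2\cdots u_k \cdot (g_1(x), \dots, g_{\ell}(x)) = (u_1u_2\cdots u_k\, g_1(x), \dots, u_1u_2\cdots u_k\, g_{\ell}(x))$$
is again a codeword of $C$. Multiplying by a constant does not change the degree of any $g_i(x)$, so no reduction modulo $x^m-\lambda$ occurs and the scaling acts coefficient by coefficient.

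Next I would apply Lemma \ref{unituv}(i) coordinatewise. Writing $g_{i,j}$ for the $j$-th coefficient of $g_i(x)$, the product $u_1u_2\cdots u_k \cdot g_{i,j}$ equals $u_1u_2\cdots u_k$ when $g_{i,j}$ is a unit and equals $0$ when $g_{i,j}$ is a non-unit. By hypothesis exactly $d_i$ of the coefficients of $g_i(x)$ are units, so the $i$-th block of $\overline{c}$ has precisely $d_i$ coordinates equal to $u_1u_2\cdots u_k$ and all remaining coordinates equal to $0$. Summing over the $\ell$ blocks, $\overline{c}$ has exactly $d_1+d_2+\dots+d_{\ell}$ coordinates equal to $u_1u_2\cdots u_k$ and zero elsewhere. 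Since $w_{hom}(u_1u_2\cdots u_k)=2^{2^k-1}$ and the homogeneous weight on $R_k^n$ is the sum of its coordinate weights, I obtain $w_{hom}(\overline{c}) = 2^{2^k-1}(d_1+d_2+\dots+d_{\ell})$, which gives the stated inequality.

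There is no substantial obstacle here; the only point requiring a word of care is the degenerate case in which every $d_i=0$, so that every coefficient is a non-unit and $\overline{c}=0$, making the bound vacuous. For any genuine one-generator code at least one coefficient is a unit, equivalently $\mu_k(C)\neq 0$, so $\overline{c}$ is a legitimate nonzero codeword. Alternatively, I could deduce the corollary directly from the preceding lemma: applying $\mu_k$ to the generating vector produces a binary vector of Hamming weight exactly $d_1+\dots+d_{\ell}$, whence the minimum Hamming weight $d$ of $\mu_k(C)$ satisfies $d\le d_1+\dots+d_{\ell}$, and the lemma's bound $d_{hom}(C)\le 2^{2^k-1}d$ then yields the claim.
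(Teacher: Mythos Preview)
Your proposal is correct and follows exactly the approach implicit in the paper: the corollary is stated without proof, immediately after the lemma, and is meant to follow by applying $\mu_k$ to the generating vector (or, equivalently, by multiplying the generator by $u_1u_2\cdots u_k$), precisely as you do. One small quibble: in the degenerate case $d_1=\dots=d_\ell=0$ the inequality is not merely ``vacuous'' but actually fails whenever the generator is nonzero with all non-unit coefficients, so the corollary tacitly assumes at least one unit coefficient; you spot this but slightly misdescribe it.
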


\section{Examples of optimal binary codes from $\psi_k$-images of QT codes over $R_k$}
In this section we will be giving some examples of optimal codes that we have obtained from the $\psi_k$-images of QT-codes over $R_k$. The optimality of these codes have been established by theoretical upper bounds and specific constructions in \cite{Grassl}. It turns out that many of our constructions serve as alternative (and usually much simpler) constructions for the optimal codes. Because the images under $\psi$ of our codes are binary QC codes of high index, we also compare our results with those collected in \cite{ChenDatabase}, the database of known QT codes. It turns out that we have found many new additions to this database through our constructions.

Before we proceed, we would like to observe that, $\psi_k$ is from $R_k$ to $\F_2^{2^{2^k-1}}$. Thus for example for $k=3$, $\psi_3$ is from $R_3$ to $\F_2^{128}$ while for $k=4$, $\psi_4$ is from $R_4$ to $\F_2^{2^{15}}$. So, for practical purposes, it is not feasible to look beyond $k=2$. That is why, in what follows we will first write down some of the general examples, but then devote separate subsections to the feasible cases of $k=1$ and $k=2$.

\subsection{The $\psi_k$-images of the repetition code}
Let $C$ be the code of length $n$ over $R_k$ generated by $(111 \dots 1)$. It is clear that $C$ is a cyclic and QC code for any suitable index. $C$ is a free code of free rank 1. Thus $|C| = 2^{2^k}$. Considering the homogeneous weights and the $\psi_k$-image we get the following family of binary codes:
\begin{theorem}\label{repet}
Let $C$ be the code of length $n$ generated by $(11\cdots 1)$ over $R_k$. Then $\psi_k(C)$ is a binary linear code
of parameters $[n\cdot 2^{2^k-1}, 2^k, n\cdot 2^{2^k-2}]$. Moreover, when $k\geq 2$, these codes are all self-orthogonal binary codes.
\end{theorem}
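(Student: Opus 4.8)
The plan is to describe $C$ explicitly and then read off each of the four parameters, together with self-orthogonality, from the structure of $\psi_k$ and the results already established. Since $C$ is free of rank $1$ and generated by the all-ones vector, every codeword has the form $(a,a,\dots,a)$ for a unique $a\in R_k$; thus $C\cong R_k$ as $\F_2$-vector spaces and $|C|=2^{2^k}$. The claimed length $n\cdot 2^{2^k-1}$ and the $\F_2$-linearity of $\psi_k(C)$ are then immediate from the general theorem that the $\psi_k$-image of a linear code over $R_k$ of length $n$ is a binary linear code of length $2^{2^k-1}n$.

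For the dimension, the key observation is that $\psi_k$ is injective. By construction it sends the $2^k$ basis elements $\{u_A\}$ of $R_k$ to the basic generating vectors of $RM(1,2^k-1)$, and since $RM(1,2^k-1)$ has $\F_2$-dimension $(2^k-1)+1=2^k$, these $2^k$ generators are linearly independent and hence form a basis. Therefore $\psi_k$ carries a basis of $R_k$ onto a basis of $RM(1,2^k-1)$ and is an $\F_2$-isomorphism onto its image. Injectivity gives $|\psi_k(C)|=|C|=2^{2^k}$, so $\dim_{\F_2}\psi_k(C)=2^k$.

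For the minimum distance I would invoke the fact that $\psi_k$ is distance-preserving, so the minimum Hamming weight of $\psi_k(C)$ equals the minimum homogeneous weight of $C$. A nonzero codeword $(a,\dots,a)$ has homogeneous weight $n\cdot w_{\hom}(a)$ by additivity of the weight over coordinates, and by the homogeneous weight formula $w_{\hom}(a)\in\{2^{2^k-2},\,2^{2^k-1}\}$ for $a\neq 0$, the smaller value $2^{2^k-2}$ being attained by any unit $a$. Minimizing over nonzero $a$ yields $d_{\hom}(C)=n\cdot 2^{2^k-2}$, which then transfers to the Hamming minimum distance of $\psi_k(C)$. Finally, self-orthogonality for $k\geq 2$ is immediate from Theorem \ref{self-orth}, since $C$ is itself a linear code over $R_k$.

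The only step that is not purely mechanical is the minimum-distance argument: one must confirm that the value $2^{2^k-2}$ is genuinely attained rather than merely the larger value $2^{2^k-1}$. This reduces to the existence of a nonzero element $a\neq u_1u_2\cdots u_k$ in $R_k$, which always holds (for instance any unit works), so the lower value is realized and the bound is tight. The remaining pieces are formal consequences of the isometry property, the injectivity of $\psi_k$, and the earlier self-orthogonality theorem.
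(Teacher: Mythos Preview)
Your proposal is correct and follows essentially the same approach as the paper, which does not give a formal proof but only the remarks preceding the theorem: it notes that $C$ is free of rank $1$ so $|C|=2^{2^k}$, and then appeals to ``the homogeneous weights and the $\psi_k$-image'' for the parameters. You have simply filled in these details explicitly---identifying $C$ with $\{(a,\dots,a):a\in R_k\}$, using the isometry property of $\psi_k$ for the minimum distance, and invoking Theorem~\ref{self-orth} for self-orthogonality---so there is no substantive difference in method.
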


\begin{example}
Putting $k=2$ into Theorem \ref{repet}, we get self-orthogonal binary linear codes of parameters $[8n,4,4n]$ from the $\psi_2$-images of the repetition code over $R_2$. For $n=1$ up to $6$, we get the self-orthogonal binary codes of parameters $[8,4,4]$, $[16,4,8]$, $[24,4,12]$, $[32,4,16]$, $[40,4,20]$ and $[48,4,24]$,
all of which are optimal linear codes according to \cite{Grassl}. When $n=7$ and $8$ we get self-orthogonal binary linear codes of parameters $[56,4,28]$ and $[64,4,32]$. The optimal codes of these lengths have parameters $[56,4,29]$ and $[64,4,33]$. But since these latter codes cannot be self-orthogonal, the codes we obtain, have the best possible minimum distance among all the self-orthogonal codes of those lengths and dimensions.
\end{example}

\begin{example}
Putting $k=3$ into Theorem \ref{repet}, we get self-orthogonal binary linear codes of parameters $[128n,8,64n]$ from the $\psi_3$-images of the repetition code over $R_3$. The only cases for which we can make comparisons are the cases when $n=1$ and $n=2$. With these values, we obtain self-orthogonal binary linear codes of parameters $[128,8,64]$ and $[256,8,128]$, both of which are optimal as linear codes.
\end{example}

\subsection{Optimal binary codes from $(1+u,3)$-QT codes over $R_1$}
Note that the ring $R_1 = \F_2+u\F_2$, the first example of the rings we study, has been studied already in the literature for cyclic, quasicyclic and constacyclic codes. We may refer the reader to \cite{Alrub1}, \cite{Alrub2}, \cite{Qian} and \cite{QCR1}. Now, the Gray map $\psi_1$ on $R_1$ that we have defined is the same map used in these works that we have mentioned and the homogeneous weight coincides with the Lee weight used. That is why, we will focus on $(1+u)$-QT codes here. We will fix the index at $3$. The results that we have found are quite different than the ones found in the above-mentioned works. We tabulate our results in the following table. To save space, we will replace $1+u$ by $3$. A typical generator of the QT-code will be given in the form $(g_1g_2\dots g_m|h_1h_2\dots h_m|r_1r_2\dots r_m)$. Thus for example $(1u|30|u3)$ will denote the one generator $QT$-code generated $(g(x),h(x),r(x))$ where $g(x)=1+ux$, $h(x)=1+u$ and $r(x) = u+(1+u)x$.

\begin{table}[H]
\caption{Optimal binary codes from $(1+u,3)$-QT codes over $R_1$ of length $3m$}
\label{tab:new96}
\begin{center}
\begin{tabular}{|c|l|c|}
\hline
$m$ & Generator of the code & Binary Image under $\psi_1$ \\ \hline
$2$ & $(0u|0u|uu)$
& $[12,2,8]$ \\ \hline
$2$ & $(10|11|3u)$
& $[12,4,6]$ \\ \hline
$2$ & $(0u|33|13)$
& $[12,3,6]$ \\ \hline
$3$ & $(00u|011|u33)$
& $[18,5,8]$ \\ \hline
$3$ & $(00u|111|111)$
& $[18,4,8]$ \\ \hline
$3$ & $(001|113|1u1)$
& $[18,6,8]$ \\ \hline
$3$ & $(0uu|0uu|uu0)$
& $[18,2,12]$ \\ \hline
$4$ & $(0011|001u|00u1)$
& $[24,8,8]$ \\ \hline
$4$ & $(0011|0013|1u1u)$
& $[24,7,10]$ \\ \hline
$4$ & $(000u|00uu|0uuu)$
& $[24,4,12]$ \\ \hline
$4$ & $(0u0u|0u0u|uuuu)$
& $[24,2,16]$ \\ \hline
$5$ & $(0011u|001u3|00u33)$
& $[30,8,12]$ \\ \hline
$5$ & $(001u1|0013u|01111)$
& $[30,9,12]$ \\ \hline
$5$ & $(13131|uuuuu|13131)$
& $[30,2,20]$ \\ \hline
$6$ & $(uuuu11|uuu103|u1u311)$
& $[36,11,12]^b$ \\ \hline
$6$ & $(u1u103|u10101|113133)$
& $[36,6,16]$ \\ \hline
$6$ & $(u1u3u1|010301|133113)$
& $[36,4,18]$ \\ \hline
$6$ & $(0u0u0u|0u0u0u|uuuuuu)$
& $[36,2,24]$ \\ \hline
$7$ & $(uuu1013|uu01033|uu11101)$
& $[42,11,16]$ \\ \hline
$7$ & $(uu10333|u1330u1|u03u331)$
& $[42,6,20]$ \\ \hline
$7$ & $(1313131|uuuuuuu|1313131)$
& $[42,2,28]$ \\ \hline
\end{tabular}
\end{center}
\end{table}

\begin{remark}
The $[36,11,12]$-code that is marked with $^b$ is the best known code of these parameters. The theoretical upper bound for the minimum distance of the code of length $36$ and dimension $11$ is $13$, which has not been attained yet. All the rest of the codes in the table are optimal, meaning that they attain the theoretical upper bounds. We also note that all the binary codes given in the above table are equivalent to $6$-QC codes.
\end{remark}

A pattern in the table has led us to observe the following:
Suppose $m$ is even. Then, since $(1+u)\cdot u = u$, the $(1+u,3)$-QT code $C$ generated by $(0u0u \dots 0u|0u0u\dots 0u|uu \dots u)$ contains just four codewords given by
$$(00\dots 0|00\dots 0|00\dots 0), (0u0u \dots 0u|0u0u\dots 0|uu \dots u),$$
$$ (u0u0 \dots u0|u0u0\dots u0|uu \dots u), (uu \dots u|uu\dots u|00 \dots 0).$$
The homogeneous weight distribution of this code is given by $1+3z^{4m}$. Thus in the $\psi_1$ image we get a binary $[6m,2,4m]$-code.

When $m$ is odd, we take the generator to be of the form $(1313 \dots 1|1313\dots 1|uu \dots u)$. Remembering that $3$ stands for $1+u$ and that $(1+u)(1+u) = 1$ and $1+(1+u)=u$, again $C$ has 4 codewords in this case, given by
$$(00\dots 0|00\dots 0|00\dots 0), (1313 \dots 1|1313\dots 1|uu \dots u)$$
$$(3131 \dots 3|3131\dots 3|uu \dots u), (uu \dots u|uu\dots u|00 \dots 0).$$
This code also has minimum homogeneous weight $4m$.

Thus we have found the following result:
\begin{theorem}
For any $m\geq 1$, we can obtain a binary code of parameters $[6m,2,4m]$ as the $\psi_1$-image of a $(1+u,3)$-QT code over $R_1$. Note that all such codes attain the Griesmer bound and thus are optimal.
\end{theorem}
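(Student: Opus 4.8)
The plan is to verify carefully the two explicit generator families already exhibited in the discussion preceding the statement, and then to pass to the binary setting through the isometry $\psi_1$ and check the Griesmer bound. So let $C$ denote the one-generator $(1+u,3)$-QT code over $R_1$ of length $3m$ generated by $g=(0u0u\cdots 0u\mid 0u0u\cdots 0u\mid uu\cdots u)$ when $m$ is even and by $g=(1313\cdots 1\mid 1313\cdots 1\mid uu\cdots u)$ (with $3=1+u$) when $m$ is odd; the two families together cover every $m\geq 1$. I will write $\sigma$ for multiplication by $x$ in $R_1[x]/(x^m-(1+u))$, i.e.\ the simultaneous $(1+u)$-circulant shift of the three length-$m$ blocks, so that $C=\{a(x)\cdot g : a(x)\in R_1[x]/(x^m-(1+u))\}$.

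First I would pin down the codewords of $C$ using the identities $(1+u)u=u$, $u^2=0$, $(1+u)^2=1$, $1+(1+u)=u$ and $u+u=0$ in $R_1$. In the even case, $(1+u)u=u$ forces $(1+u)g=g$ and $ug=0$, so scalars contribute only $0$ or $g$; one checks directly that $\sigma^2(g)=g$, so that the module generated by $g$ is the $\F_2$-span of $\{g,\sigma(g)\}$, namely $\{0,\,g,\,\sigma(g),\,g+\sigma(g)\}$. In the odd case the same identities give the cleaner relation $\sigma(g)=(1+u)g$ together with $ug=g+\sigma(g)$, whence $C=R_1\cdot g=\{0,\,g,\,(1+u)g,\,ug\}$, again four distinct vectors. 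In both situations $C$ consists of exactly the four codewords listed in the text, so $|C|=4$.

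Next I would compute homogeneous weights, recalling that for $k=1$ one has $w_{\hom}(u)=2$ and $w_{\hom}(1)=w_{\hom}(1+u)=1$. A direct count shows that each of the three nonzero codewords has total homogeneous weight exactly $4m$: in the even case every nonzero codeword carries $2m$ coordinates equal to $u$ and no unit coordinates, contributing $2\cdot 2m=4m$; in the odd case two of the codewords carry $2m$ unit coordinates of weight $1$ together with $m$ coordinates equal to $u$ of weight $2$, while the third carries $2m$ coordinates equal to $u$, again totalling $4m$. Hence $C$ has homogeneous weight enumerator $1+3z^{4m}$ and minimum homogeneous weight $4m$. Applying the theorem that $\psi_k$ sends a linear code over $R_k$ of length $n$ to a binary linear code of length $2^{2^k-1}n$ preserving the weight distribution, with $k=1$ I obtain that $\psi_1(C)$ is binary linear of length $2\cdot 3m=6m$; since $\psi_1$ is injective, $|\psi_1(C)|=|C|=4$ and so $\dim\psi_1(C)=2$, and the minimum Hamming weight equals the minimum homogeneous weight $4m$. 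Thus $\psi_1(C)$ has parameters $[6m,2,4m]$.

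Finally, for optimality I would invoke the Griesmer bound: any binary $[n,k,d]$ code satisfies $n\geq\sum_{i=0}^{k-1}\lceil d/2^i\rceil$. For $k=2$ and $d=4m$ the right-hand side equals $\lceil 4m\rceil+\lceil 2m\rceil=4m+2m=6m$, which is exactly the length of $\psi_1(C)$; hence these codes meet the Griesmer bound with equality and are therefore optimal. I expect the only genuinely delicate point to be the closure computation in the first step, namely confirming that the $R_1[x]$-module generated by $g$ collapses to precisely four codewords, which proceeds differently in the even and odd cases, since everything afterwards is a routine application of the isometry together with a one-line evaluation of the Griesmer bound.
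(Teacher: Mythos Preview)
Your proposal is correct and follows essentially the same approach as the paper: exhibit the two explicit generators according to the parity of $m$, verify that the resulting $(1+u,3)$-QT code has exactly four codewords each of homogeneous weight $4m$, and then apply the isometry $\psi_1$ together with the Griesmer bound. Your treatment is in fact more careful than the paper's, since you supply the module-closure argument (via $\sigma^2(g)=g$, $(1+u)g=g$, $ug=0$ in the even case and $\sigma(g)=(1+u)g$ in the odd case) that the paper leaves implicit when it simply lists the four codewords.
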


\subsection{Optimal binary codes from QT codes over $R_2$}
We give some examples of optimal binary codes from the $\psi_2$-images of cyclic and QC-codes over $R_2$. $R_2=\F_2+u\F_2+v\F_2+uv\F_2$, having $16$ elements, we use the hexadecimal notation to denote the elements of $R_2$ in numerical form. For this we fix the basis $\{uv,v,u,1\}$ for $R_2$ and represent the element $auv+bv+cu+d$ as the $4$-bit $abcd$ which, then is represented by its hexadecimal value. Thus for example $uv+u+1$ is represented by $b$, which stands for the numerical value of $11$, while $v+u+1$ is represented by $7$.

\begin{table}[H]
\caption{Optimal binary codes from $\psi_2$-images of cyclic codes over $R_2$}
\label{cycR2}
\begin{center}
\begin{tabular}{|c|l|c|c|}
\hline $n$ & Generator of the code & Binary Image under $\psi _{2}$ & As 8-QC in Database in \cite{ChenDatabase}
\\ \hline $2$ & $(11)$ & $[16,4,8]$ & New \\ \hline $3$ & $(088)$ &
$[24,2,16]$ & New \\ \hline $3$ & $(246)$ & $[24,4,12]$ & New \\ \hline $3$ &
$(135)$ & $[24,8,8]$ & New \\ \hline $3$ & $(019)$ & $[24,9,8]$ & New \\ \hline
$4$ & $(0282)$ & $[32,4,16]$ & --- \\ \hline $4$ & $(1199)$ & $[32,5,16]$ & New
\\ \hline $4$ & $(1119)$ & $[32,6,16]$ & New \\ \hline $5$ & $(11111)$ &
$[40,4,20]$ & New \\ \hline $5$ & $(02442)$ & $[40,8,16]$ & New\\ \hline $6$ &
$(aec26c)$ & $[48,4,24]$ & New\\ \hline $6$ & $(088088)$ & $[48,2,32]$ & New \\
\hline $7$ & $(0888008)$ & $[56,3,32]$ & New \\ \hline $8$ & $(ceec4e6c)$
& $[64,6,32]$ & ---\\ \hline
\end{tabular}
\end{center}
\end{table}

\begin{table}[H]
\caption{Optimal binary codes from $\psi_2$-images of $\ell$-QC codes of length $\ell\cdot m$ over $R_2$ } \label{2QCR2}
\begin{center}
\begin{tabular}{|c|c|l|c|c|}
\hline $\ell$ & $m$ & Generator of the code & Binary Image under $\psi _{2}$ & As $\ell$-QC in Database in \cite{ChenDatabase}
\\ \hline $2$& $2$ & $(2c|2c)$ & $[32,4,16]$ & New 16-QC\\
\hline  $2$ & $2$ & $(5F|57)$ & $[32,5,16]$ & New 16-QC\\
\hline $2$ & $3$ & $(088|088)$ & $[48,2,32]$ & New 16-QC \\
\hline  $2$& $3$ & $(066|6e8)$ & $[48,4,24]$ & New 16-QC\\
\hline $2$& $3$ & $(246|2c6)$ & $[48,5,24]$ & New 16-QC \\
\hline $2$& $4$ & $(aaa2|4e4e)$ & $[64,5,32]$ & New 16-QC\\
\hline $2$& $4$ & $(1573|bf51)$ & $[64,6,32]$ & New 16-QC\\
\hline $2$& $4$ & $(f539|b579)$ & $[64,7,32]$ & New 16-QC\\
\hline $3$ & $2$ & $(08|08|88)$ & $[48,2,32]$ & ---\\
\hline $3$ & $3$ & $(231|f87|bc7)$ & $[72,8,32]$ & New 24-QC\\
\hline
\end{tabular}
\end{center}
\end{table}

\begin{remark}
The codes in Tables 1-3 have different properties even if some of them have the same parameters. The binary codes in Table 1 are equivalent to $6$-QC codes while the binary codes in Table 2 all are 8-QC codes and the ones in Table 3 are 16-QC or 24-QC according as $\ell=2$ or $3$.
\end{remark}
\section{Conclusion}
It is known in the literature of coding theory that quasicyclic codes (and more generally quasitwisted codes) tend to have good parameters as they satisfy a modified version of the Gilbert-Varshamov bound. Many optimal codes have constructions that are related to quasicyclic codes.

In this work, we demonstrated that, considering QT codes over $R_k$ with respect to the homogeneous weight lead to many optimal (self-orthogonal) codes. Our constructions have resulted much more than what has been put into Tables 1-3 (with many different weight enumerators and automorphism groups), however, we have given a sample of these to illustrate the effectiveness of our constructions. We have found many new quasiyclic codes that can be added to the database of known binary qausicyclic codes in \cite{ChenDatabase}. We have also come up with rather simple constructions for many optimal binary codes in the database \cite{Grassl}, that otherwise have rather complicated constructions. We believe such an approach can be applied to other Frobenius rings as well.

\bibliographystyle{amsplain}

\end{document}